
\documentclass[final,1p,times]{elsarticle}

\usepackage{tikz}

\usepackage[hyperindex,breaklinks]{hyperref}
\usepackage{breakurl}
\hypersetup{
colorlinks= false,
linkcolor=black,
citecolor=black,
citecolor= gray
}\usepackage[normalem]{ulem}

\usepackage{longtable}
\usepackage{stfloats}
\usepackage{subcaption}
\usepackage{latexsym}
\usepackage{amsmath}
\usepackage{bm}
\usepackage{amsthm} 
\usepackage{bbm}
\usepackage{afterpage}
\usepackage{multirow}
\usepackage{tabularx}
\usepackage{algorithm}
\usepackage{xcolor,colortbl}
\usepackage{algpseudocode}
\usepackage{booktabs}
\usepackage{dsfont}
\usepackage{pdflscape}
\usepackage{afterpage}
\usepackage{comment}
\usepackage{enumitem}
\usepackage{url}\urlstyle{same}
\graphicspath{{figure/}}
\journal{arXiv}
\bibliographystyle{model2-names}\biboptions{authoryear}

\DeclareMathOperator*{\argmin}{argmin}
\begin{document}

\newtheorem{proposition}{Proposition}
\newtheorem{lemma}{Lemma}
\newdefinition{remark}{Remark}
\newproof{pf}{Proof}
\newtheorem{theorem}{Theorem}
\def\UrlBreaks{\do\/\do-}
\newcommand{\tabincell}[2]{\begin{tabular}{@{}#1@{}}#2\end{tabular}}
\allowdisplaybreaks
\begin{frontmatter}
 \title{Dynamic Vaccination Game in A Heterogeneous Mixing Population}


\author[CEE]{Liqun Lu}
\author[CEE]{Yanfeng Ouyang\corref{correspondingauthor}}
\address[CEE]{Department of Civil and Environmental Engineering, University of Illinois at Urbana-Champaign, Urbana, IL 61801, USA}
\cortext[correspondingauthor]{Corresponding author. Tel: +1 217 333 9858}
\ead{yfouyang@illinois.edu}

\begin{abstract}
	Opposition to vaccination has long been a non-negligible public health phenomenon resulted from people's varied perceptions toward vaccination (e.g., vaccine-phobia). This paper investigates the voluntary vaccination behavior of a heterogeneous population during an epidemic outbreak, where each individual makes its own vaccination decision to minimize its expected disutility from both vaccine-phobia and the risk of infection. Such a problem is known as a vaccination game, as people's vaccination decisions not only affect their own disutilities but those of all others through probabilistic disease transmissions. To study the vaccination game, the susceptible-infected-removed disease propagation process is generalized into a new epidemic dynamics model to allow dynamic vaccination and immunity activation in a heterogeneous mixing population. An efficient computation method is proposed to evaluate the final state of the dynamic epidemic system. Then, a classic game-theoretical equilibrium model is built upon these results to examine the impacts of people's vaccination behavior on the overall risk of epidemic outbreak. A hypothetical case study is used to validate the dynamics model and the derived results, and extensive numerical experiments are conducted to identify the key factors that affect people's vaccination decisions and the risk of an outbreak. Moreover, three alternative vaccination schemes are also studied to examine the effects of early and non-differential vaccination treatments, respectively.
\end{abstract}

\begin{keyword}
	Vaccination game, epidemic dynamics, heterogeneous mixing population
\end{keyword}

\end{frontmatter}
\section{Introduction}\label{sec:Introduction}
Since the invention of the first vaccine against smallpox in 1798, a great many different vaccines have been developed as the most powerful weapons against infectious diseases. Besides the smallpox, a handful of the other diseases that had haunted human beings for a long history, including diphtheria, measles, hepatitis, etc., are now finally under control. Nonetheless, while the human society are freed from the fear of infectious diseases in the modern world, people start to cast vaccines aside as an option rather than a necessity \citep{history}. It has been reported that some kids in the U.S. are again exposed to high-risk infectious diseases such as measles and meningitis because their parents opted out of public vaccination programs \citep{sun_2018}; moreover, an online poll showed that over one third of the U.S. parents do not intend to give flu-shots to their school-age children in the 2018 flu season \citep{flushots}. 
One of the reported reasons of these opt-outs is the disputable link between the vaccination and child autism \citep{calandrillo2003vanishing}. Furthermore, numerous surveys have shown that during the recent H1N1 pandemic, the public's intention of vaccine uptake is disappointingly low. This is even the case for healthcare workers themselves who have above-average knowledge of vaccinations \citep{myers2011determinants,chor2009willingness, tozzi2009parental}. These reports show that ``vaccine-phobia'' is not merely a subculture among a certain group of people, but rather a serious issue that demands attention in the public health industry.
\cite{mckee2016exploring} summarized four main reasons for people of avoid vaccination: religious concerns, philosophical concerns, safety concerns, and insufficient knowledge. As such, the idea that one may be vaccinated at some point (even before the actual event of vaccination) often causes psychological sufferings.\footnote{Most existing work that investigated people's vaccination behavior often assumed people incur penalty only after they have been vaccinated, which is suitable when vaccines have non-negligible side-effects or may lead to morbidity. Yet the psychological sufferings could occur regardless of the actual presence of negative consequences.} Yet the potentially disastrous outcome of infection drives a person toward accepting vaccination in the meantime. Therefore, for each individual the decisions on whether and when to take vaccine depend on the trade-off between the psychological penalty and the risk of being infected. Moreover, in a population, the infection risk also depends on the overall vaccination coverage of all individuals -- if all others with whom one interacts are all vaccinated, then there is no fear about the epidemic. Therefore, the decisions of all individuals are simultaneously influenced by, and also influencing, each other. This problem is referred to as a \textit{vaccination game} in the literature. It is important to study the complex behavior of individuals in a vaccination game so as to understand the development and consequence of an epidemic outbreak, and in turn to reveal insights on epidemic controls.

Modeling the vaccination game is challenging, because it is imperative to account for heterogeneity among the individuals. While the disease characteristics and the exogenous environment, e.g., the disease propagation and the vaccination effectiveness conditions, can be relatively similar to everyone, each individual has a different physical condition and psychological belief system. The interactions between individuals themselves and those with the environment also vary drastically from person to person; e.g., depending on how frequently a person makes disease-transmissive contacts with the others. Consequently, different people often make different vaccination decisions. Furthermore, it shall be noted that a person with more interactions not only has a higher infection risk himself/herself, but also poses a greater threat all those around him/her. In other words, people's vaccination decisions contribute differently to the epidemic outbreak, and such heterogeneity should be taken into consideration.

Another challenging issue associated with a vaccination game is the proper modeling of the epidemic evolution process. Since disease propagation and vaccination decisions are dynamic, epidemic dynamic models must be developed to evaluate the final or steady state of the system. 
The well-known susceptible-infected-removed (SIR) models usually demand tedious iterations over time steps to evaluate the outcome of a vaccination strategy; e.g., see \cite{bauch2003group} for an example. In the hope to avoid such difficulty, most of the existing literature on dynamic vaccination game either (i) considered susceptible-infected-susceptible problems \citep{reluga2009sis} or a dynamic population \citep{bauch2004vaccination, reluga2006evolving} such that a steady state equilibrium of the epidemic can be found, or (ii) approximated the results by using a perceived infection risk among the population \citep{cojocaru2008dynamic} or linearized system dynamics \citep{reluga2011general}. A method to conveniently and accurately evaluate the outcome of a vaccination strategy during a fast-spreading SIR epidemic is yet to be found.

In light of these challenges, we investigate the dynamic vaccination game in a large heterogeneous population. We consider the infectious disease spreads quickly via person-to-person contacts in a population contact network. To stay focused, we consider population heterogeneity in terms of a person's number of contacts, while every individual shares the same level of vaccine-phobia. The main contributions of this paper are summarized as follows:

\begin{enumerate}[noitemsep,nolistsep]
	\item A susceptible-infected-removed-vaccinated-activated (SIRVA) epidemic dynamic model is developed to describe the coupled disease propagation process and the vaccination dynamics; a computation method to conveniently evaluate the final outcome of people's vaccination decisions is proposed;
	\item Then, the vaccination game in a large heterogeneous population is formulated into a mathematical program, to which existence and uniqueness of an equilibrium solution are investigated; based on these findings, an efficient heuristic algorithm is proposed to solve the equilibrium solution to the game;
	\item A hypothetical case study is performed to validate the dynamic model and the derived results, and to demonstrate the vaccination game model and solution approach; through extensive numerical experiments, we investigate several critical factors (including network connectivity, disease transmissibility, and vaccine-phobia level) that affect people's vaccination behavior and the epidemic propagation outcome; moreover, the impacts of delayed and homogeneous vaccination are examined as well.
\end{enumerate}

The rest of this paper is organized as follows. The next section reviews some of the most relevant literature on epidemic dynamic models and vaccination games. Section \ref{sec:methodology} presents the developed SIRVA dynamic model and the analytical results on the final epidemic size; some properties of the vaccination game equilibrium are also presented, based on which a heuristic solution approach is proposed to find the equilibrium. Next, Section \ref{sec: other schemes} discusses several different vaccination schemes, including delayed and non-differential vaccinations. Section \ref{sec:numerical} presents the case study, the sensitivity analysis, and discusses various insights. Finally, Section \ref{sec:Conclusion} concludes this paper and discusses future research directions. 

\section{Literature Review} \label{sec:Literature}
\subsection{Epidemic dynamic modeling}\label{sec:Literature-dynamic modeling}
Infectious disease modeling has been intensively studied in the past century since the establishment of the fundamental Kermack-Mckendrick model \citep{kermack1927contribution}; see \citep{anderson1992infectious} for a review. It categorizes the population into several compartments based on their health status: susceptible (S), infectious/infected (I) and recovered/removed (R), and hence is also called the SIR model. The Kermack-Mckendrick model, and many others alike, assume homogeneous mixing of the population; however, many infectious diseases are transmitted via person-to-person contact networks, which can be highly heterogeneous.

Numerous network-based epidemic models have been developed with a variety of focuses; see \cite{pastor2015epidemic} for a summary. In particular, random network models have gained much attention since the emergence of random network analysis on configuration models \citep{newman2001random}. In these models, each individual in the population is represented by a node, and an (undirected) edge represents a direct contact between two individuals that may transmit the disease. The actual connections among the nodes are described probabilistically; i.e., each node is associated with a degree of \textit{stubs}, and an edge is formed by randomly connecting two unconnected stubs. The degree of an arbitrary node follows a given probability distribution, and such \textit{degree distribution} is expressed in the form of a probability generating function, as follows \citep{molloy1995critical}:
\begin{equation}
g(x) = \sum_{k} p_k x^k,
\end{equation}
where $p_k$ is the probability that a random node has $k$ degrees, and $x$ is a dummy argument. As such, $g'(1)$ and $g''(1)$  are the expected numbers of direct neighbors and second-order neighbors of an arbitrary node, respectively.

\cite{volz2008sir} and \cite{miller2011note} proposed a system of low-dimensional equations to describe the SIR epidemic dynamics in a population with heterogeneous mixing, modeled as a configuration model random network. This model tries to capture the probability for a randomly selected test node to be susceptible, based on whether or not any of its incident edges have transmitted disease from its (infected) neighbors. For convenience, we call such an event as an \textit{infectious contact}. Two variables are defined to capture this quantity:
\begin{itemize}[noitemsep,nolistsep]
	\item $\theta$: the probability that a random edge of the test node has not passed an infectious contact to the test node, assuming the test node does not transmit disease to others;
	\item $\phi$: the probability that a random edge of the test node is connected to an infected neighbor, but has not passed an infectious contact to the test node, assuming the test node does not transmit disease to others.
\end{itemize}
An infectious contact along any edge and the recovery of any infected node are both assumed to follow Poisson processes, at a unit transmission rate, denoted $r$, and a recovery rate, denoted $u$, respectively. Moreover, $S$, $I$, and $R$ are used to represent the fractions of population in the three compartments, respectively. The system equations can be summarized as follows:
\begin{align}
& \dot{\theta} = - r \phi, \forall t \geq 0,  \label{eq: dot theta natural} \\
& \dot{\phi} = \left( -r - u + r \frac{g''(\theta) }{g'(1)} \right)  \phi, \forall t \geq 0,  \label{eq: dot phi natural} \\
& \dot{R} = u I, \forall t \geq 0,  \label{eq: dot R natural} \\
& S(t)  = g \left( \theta \right), \forall t \geq 0, \label{eq: S natural} \\
& S(t) + I(t) + R(t) = 1, \forall t \geq 0. \label{eq: SIR} 
\end{align}
In Eq.\eqref{eq: dot phi natural}, the first two terms in the parentheses on the right hand side represent, respectively, the transmission of the disease along an edge that satisfies the definition of $\phi$, and the removal of an infected neighbor linked by such an edge. The third term is the derivative of $ -g'(\theta) / g'(1)$, where $g'(\theta) / g'(1)$ represents the probability that an arbitrary neighbor of the test node is susceptible. Therefore, the third term captures the increment of $\phi$ due to new infections at the neighbors of the test node (i.e., infected by their own neighbors other than the test node). This model is simple but effective, and hence will be used as the basis for the proposed model in this paper. 

\subsection{Vaccination control}\label{sec:Literature-vaccination control}
The effect of vaccination could be modeled in a variety of ways, mainly depending on its timing: i) vaccination prior to the epidemic outbreak, or immediate vaccination once a new population is born (e.g., \cite{madar2004immunization} and \cite{bauch2004vaccination}); and ii) vaccination during disease propagation when the vaccine takers have been exposed to potential risks of infection for some time (e.g., \cite{lu2002effect} and \cite{pang2007delayed}). It is worth noting that, if immunity obtained via vaccination does not wane, early vaccination is always more effective in mitigating disease propagation because of survival bias.

\cite{bansal2006comparative} and \cite{mylius2008optimal} compared two vaccination strategies under different vaccination timings, targeting subpopulations that have high infection risks (morbidity-based) and those more likely to decease once infected (mortality-based), respectively. Their results showed that if vaccines are available prior to the onset of disease outbreak, morbidity-based vaccination should be adopted as it immunizes the high risk population and mitigates disease propagation most effectively. However, delay in availability of vaccination has a tremendous impact on the epidemic result, and in that case mortality-based vaccination out-performs its morbidity-based counterpart. 

Impacts of network topology and vaccination strategy on SIR-type disease propagation in large-scale random networks were studied in \cite{ma2013importance}. Stochastic simulations were conducted for various combinations of random networks (e.g., Erdos-Renyi model, scale-free, etc.) and vaccination strategies (e.g., prioritized, random, etc.). They discovered that network topology has a significant impact on the spread of disease as well as the timing of vaccination. This finding highlights the importance of understanding disease transmission in realistic contact networks.
The recent independent effort in \cite{di2018multiple} incorporated dynamic vaccination into the models of \cite{volz2008sir} and \cite{miller2011note}. Despite some similarity in the research goals, there are key differences in our approaches. For example, their model did not account for population's heterogeneous vaccination behavior, which nonetheless is crucial in modeling the vaccination game in a heterogeneous mixing population. 
\subsection{Vaccination game}\label{sec:Literature-vaccination game}
The term ``vaccination game'' has been used to describe an imitation process where individuals learn about others' perspective on vaccine and follow probabilistically each other's actions \citep{bauch2005imitation, zhang2013impact, han2016evolutionary}. In this paper, we go back to its literal meaning of a classic game-theoretical framework, where each individual in a population is a decision maker that tries to maximize its own utility based on its knowledge on the system. A review on the impacts of typical vaccination behavior can be found in \cite{funk2010modelling}. However, the sources of such behavior as a result of people's decision-making processes in an epidemic outbreak are still not fully understood.

\cite{reluga2011general} proposed a general framework for vaccination game in a large population, where people's behavior was modeled as a Markovian decision process that maximizes the expected utility of each individual. Within this framework, they compared analytical solutions to the game under several scenarios, such as differential waning of immunity and imperfect immunity. Their model assumed that all susceptible individuals will eventually be vaccinated (or infected and removed if infection happens before vaccination), which is, however, not practical because people's vaccination intention during an outbreak would likely depend on the disease propagation status. 
\cite{bauch2004vaccination} investigated a problem where parents need to determine whether or not to vaccinate their children. By considering the equilibrium state in a dynamic population, they were able to directly compute the infection risk given a vaccination strategy. Moreover, through analytical discussions, they showed that the vaccination decisions in a homogeneous population shall be equal. Population heterogeneity was considered in \cite{cojocaru2008dynamic}, where multiple groups of population each has a different perception on the infection risk. Nash equilibria was formulated into evolutionary variational inequalities. Although the ground-truth infection risk in this work was simplified by a perceived risk, their results highlighted the importance of accounting for social heterogeneity in a vaccination game. 

Moreover, to investigate the vaccination game, it is imperative to quantitatively measure people's disutility due to psychological suffering, based on not only the likelihood of vaccine uptake, but also how people perceive vaccination (e.g., the level of vaccine-phobia). Existing social studies try to find the correlation between people's perception on vaccines and their vaccine uptake decisions from field surveys, e.g. \cite{smith2011parental}. In light of this, we will next aim at drawing a bridge between these two aspects of vaccination, by establishing a model on how vaccine-phobia may affect people's decision-making process and in turn the disease propagation. 

\section{Methodology}\label{sec:methodology}
\subsection{Vaccination Dynamics}\label{subsec: vaccination dynamics}
We consider a dynamic vaccination game in a large population with heterogeneous degree distributions, where the natural disease propagation follows an SIR process. In order to accommodate the vaccination process, we further define two compartments based on individuals' health status: vaccinated (V) and activated (A). A susceptible node becomes vaccinated once being injected with a vaccine and thus obtaining immunity. When a vaccinated node receives an infectious contact, instead of becoming infected (like a susceptible node), it activates its immunity and thus become activated. The transition among the different health states are summarized below:
\begin{itemize}[noitemsep,nolistsep]
	\item A susceptible node becomes infected upon receiving an infectious contact from one of its neighbors;
	\item An infected individual deceases or recovers by itself (with immunity) following the natural removal rate;
	\item A susceptible node becomes vaccinated at a certain rate (based on its own decision process);
	\item A vaccinated node becomes activated upon receiving an infectious contact from one of its neighbors.
\end{itemize}

We now use $S(t), I(t), R(t), V(t), A(t)$ to denote the fractions of susceptible, infected, removed, vaccinated, and activated individuals, respectively.\footnote{It might appear cumbersome to differentiate the activated from the vaccinated as they both represent the same immunized status. However, it is quite necessary to do so, because the model described by \eqref{eq: SIR} and \eqref{eq: dot theta natural} - \eqref{eq: S natural} focuses on the status of edges (i.e., whether or not they have passed infectious contacts), and hence the status of the nodes can only be derived from the edge status.}
The size of population is fixed, such that the equivalent of Eq.\eqref{eq: SIR} now becomes:
\begin{equation}
S(t) + I(t) +R(t) + V(t) + A(t)= 1, \forall t \geq 0.
\label{eq: fixed popualtion size}
\end{equation}
Furthermore, we differentiate nodes by their degrees; denote $X_k$ as the fraction of degree-$k$ nodes in compartment $X \in \left\lbrace S, I, R, V, A \right\rbrace$. Naturally, we have
\begin{subequations} \label{eq: degree does not change} 
	\begin{eqnarray}
	& \sum_{k=1}^K X_k (t)= X(t), \forall t \geq 0,  X \in \left\lbrace S, I, R, V, A \right\rbrace, \\
	& \sum_{X \in \left\lbrace S, I, R, V, A \right\rbrace} X_k(t) = p_k, \forall t \geq 0, k = 1, \dots ,K.
	\end{eqnarray}
\end{subequations}
where $K \ge 0$ is the maximum possible degree of a node. 

In the vaccination game, we consider the Nash game; i.e., each individual tries to minimize its own expected penalty by deciding its vaccination rate, given others' vaccination decisions. We use $v_k$ to denote the average vaccination rate of degree-$k$ nodes. The expected penalty includes the psychological penalty from vaccine-phobia (if vaccinated), and the penalty of eventually getting infected (if not vaccinated). 

Consider the status of a degree-$k$ node who has not received any infectious contacts from its neighbors. The fraction of such nodes is given by $p_k \theta^k$, and these nodes can either be susceptible or vaccinated, which yields the following equation:
\begin{equation}
S_k(t) + V_k (t) = p_k \theta^k, \forall t \geq 0,  k = 1, \dots, K. \label{eq: degree k non-tranmitted}
\end{equation}
The dynamics of $S_k$ contain two parts: the vaccinations and the infections of the degree-$k$ population, respectively. The former is simply $- v_k S_k$. Moreover, the test node being vaccinated or not does not affect the infectious contacts it receives, and thus the node being vaccinated or susceptible is random (as long as there is no infectious contact yet). Therefore, the latter is the rate of reduction among those who have not received infectious contacts, $d \left( p_k \theta^k \right) / dt $, times the fraction of susceptible nodes, $S_k/p_k \theta^k $. Furthermore, the dynamics of $\theta$ still follow Eq.\eqref{eq: dot theta natural} by definition. As such, we have
\begin{equation}
\dot{S}_k =  - v_k S_k + \frac{d \left( p_k \theta^k \right) }{ dt} \frac{S_k}{p_k \theta^k} =\left( - v_k - k r \frac{\phi}{\theta} \right)  S_k, \forall t \geq 0, k = 1, \dots, K. \label{eq: dot S}
\end{equation}

The dynamics of $\phi$, i.e. Eq.\eqref{eq: dot phi natural}, should be slightly modified. Recall its dynamics include three parts: (i) transmission of disease along edges to the test node, (ii) removal of the infected neighbors, and iii) new infections of the susceptible neighbors. Vaccination of the susceptible does not affect the already infected nodes nor occurrence of infectious contacts, thus the first two terms remain unchanged. The third term is essentially the rate at which a susceptible neighbor of the test node becoming infected, i.e., infectious contacts happening to the neighbor while the neighbor is not vaccinated. 
The probability that a random neighbor of the test node is a degree-$k$ node who has not received any infectious contacts from others is $ k p_k \theta^{k-1} / g'(1)$. The rate at which infectious contacts happen to this neighbor is then $d\left(- k p_k \theta^{k-1} / g'(1) \right) / dt$. Using the same argument that leads to Eq. \eqref{eq: dot S}, the rate of the neighbor being infected is $-\left(  S_k / p_k \theta^k \right) \cdot d\left(k p_k \theta^{k-1} / g'(1) \right) / dt$. Therefore, Eq.\eqref{eq: dot phi natural} becomes
\begin{equation}
\dot{\phi} = \left( -r - u + \frac{r}{g'(1) \theta^2 } \sum_{k=1}^K S_k k (k-1) \right) \phi . \forall t \geq 0.  \label{eq: dot phi with vac 1}
\end{equation}

Finally, a degree-$k$ vaccinated node becomes activated at the same rate as a degree-$k$ susceptible node becomes infected. Therefore,
\begin{equation}
\dot{A}_k = k r \frac{\phi}{\theta} V_k, \forall t \geq 0,  k = 1, \dots, K, \label{eq: A} 
\end{equation}

To this point, we have formulated the complete system dynamics with Eq.\eqref{eq: dot theta natural}, \eqref{eq: dot R natural}, and \eqref{eq: fixed popualtion size}-\eqref{eq: A}. One can evaluate the impacts of the population's vaccination decisions by iteratively computing the dynamic equations over time from the beginning of the epidemic to the end. This is computationally burdensome for a large population and cannot easily reveal insights on the vaccination game. In light of this, we further assume that people's vaccination strategies $v_k$ follow a special form, such that the outcome of the epidemics can be derived analytically in closed form. This will be the focus of the remainder of this paper.

\subsection{Final Epidemic Size} \label{subsec: final epidemic size}
In reality, people often determine their vaccination rate based on the current propagation rate of the disease. Therefore, we consider a vaccination strategy in which the vaccination rate is proportional to the fraction of infectious contacts that is going to happen. From now on, we use an explicit argument $t$ for all the time-dependent variables so as to distinguish them from time-invariant parameters. Then the vaccination strategy is presented as follows:
\begin{equation}
v_k(t) = \mu_k r \frac{\phi (t)}{\theta(t)}, \forall t \geq 0, k = 1, \dots, K, \label{eq: special v}
\end{equation}
where $\mu_k \geq 0$ is a constant parameter determined by the degree-$k$ population. It reflects the likelihood of the population taking vaccines, and we call it \textit{vaccine adoption level}.\footnote{Here $\mu_k$ represents an averaged vaccine adoption level to capture the overall vaccination rate of the degree-$k$ population. Yet in fact each individual could make its own decision and their vaccine adoption levels may differ. However, as we will show later, individual decisions should be equal under equilibrium for nodes with the same degree. Thus it is safe to use an aggregated rate in the following derivations.} The higher vaccine adoption level $\mu_k$ among the population, the lower infection risk but a greater psychological suffering (due to the vaccine-phobia).

With Eq. \eqref{eq: special v}, we have the following equation:
\begin{equation}
S_k(t) = p_k \theta(t)^{k + \mu_k}, \forall t \geq 0. \label{eq: special S}
\end{equation}
Readers can easily verify this by taking derivative over Eq. \eqref{eq: special S} and comparing it with Eq. \eqref{eq: dot S}. 

Then we use the same approach as in \cite{miller2011note} to obtain the final epidemic size. First, we denote $\theta(\infty)$ as $\theta_\infty$, and the total immunized fraction as $M$, i.e., $M = V(\infty) + A(\infty) = \sum_k \int_{0}^{\infty} v_k(\tau) S_k (\tau) d\tau $. With Eq. \eqref{eq: special v} and \eqref{eq: special S}, we have
\begin{equation}
M = \sum_{k=1}^K \frac{\mu_k p_k}{k + \mu_k} \left( 1 - \theta_{\infty}^{k+\mu_k} \right). \label{eq: p special}
\end{equation}
By the end of the epidemic, the degree-$k$ uninfected population is the summation of susceptible, vaccinated, and activated nodes:
\begin{equation}
S_k(\infty) + V_k(\infty) + A_k(\infty) = \frac{ p_k k \theta_{\infty}^{k+\mu_k} + \mu_k p_k }{k+\mu_k}, \forall k = 1, \dots, K  . \label{eq: total healthy}
\end{equation}

Moreover, we let $\pi$ be the probability that a random neighbor of the test node has never been infected by the end of the epidemic, given that the test node does not transmit the disease. This could happen under two mutually exclusive events: (i) no infectious contact has ever happened to this neighbor, and (ii) this neighbor has already been vaccinated when an infectious contact arrives, i.e. this neighbor is activated. The first probability is given by $g'(\theta_\infty) / g'(1)$. The second can be computed via integrating $ \sum_{k=2}^{K} \frac{V_k}{p_k \theta^k} d \left( - \frac{ p_k k \theta^{k-1} }{g'(1)} \right)/ dt $ from $t=0$ to $\infty$. The result is presented as follows:
\begin{equation}
\pi = \frac{1}{g'(1)} \left( g'(\theta_\infty) + \sum_{k=2}^K p_k k \left( \frac{\mu_k}{k + \mu_k -1} - \theta_\infty^{k-1} + \frac{k -1}{k + \mu_k - 1} \theta_\infty^{k + \mu_k -1} \right)  \right). \label{eq: P}
\end{equation}

We further denote the \textit{transmissibility} of disease as $T = \frac{r}{r+u}$. It gives the probability that, conditional on the fact that one end of a random edge is infected, an infectious contact happens along this edge before that infected end is removed (recall, that both events follow Poisson processes). The probability that a random edge of the test node has never transmitted an infectious contact is then $1 - T ( 1- \pi)$, which, from Eq. \eqref{eq: P}, yields the following nonlinear equation:
\begin{equation}
\theta_\infty = 1 - T + \frac{T}{g'(1)} \left( g'(\theta_\infty) +  \sum_{k=2}^K p_k k \left( \frac{\mu_k}{k + \mu_k -1} - \theta_\infty^{k-1} + \frac{k -1}{k + \mu_k - 1} \theta_\infty^{k + \mu_k -1} \right) \right). \label{eq: theta_infty}
\end{equation}
The solution can be easily found by searching from 1 to 0 with a small step size. Then, the final uninfected population, from Eq. \eqref{eq: total healthy}, as well as the final epidemic size, $R(\infty) = 1 - (S(\infty) + V(\infty) + A(\infty))$, can be computed accordingly. Moreover, the fraction of nodes in other compartments in the final state can be computed as follows:
\begin{align}
& A_k(\infty) = p_k \left( \frac{\mu_k}{k+\mu_k}  - \theta_\infty^k +  \frac{k}{k+\mu_k} \theta_\infty^{k+\mu_k} \right), \forall k = 1, \dots, K,\label{eq: special final Ak}\\
& I_k(\infty) = 0, \forall k = 1, \dots, K, \label{eq: special final Ik} \\
& R_k(\infty) = p_k \left( 1 - \frac{ k \theta_{\infty}^{k+\mu_k} + \mu_k }{k+\mu_k} \right) , \forall k = 1, \dots, K.  \label{eq: special final Rk} 
\end{align}

\subsection{Vaccination Game}\label{sec:Game}
We are now ready to formulate the vaccination game that determines $\mu_k$ for each degree-$k$ subpopulation. We denote $\boldsymbol{\mu} =  \left\lbrace \mu_{k}\right\rbrace_{\forall k}$ as the overall vaccine adoption level across the population. Moreover, we use superscription $i+$ to denote the vaccination decision of an individual $i$, and $i-$ to denote the collective decision of all others; e.g., a degree-$k$ individual $i$'s vaccine adoption level is denoted $\mu^{i+}_k$, and that of the rest of the population is $\boldsymbol{\mu}^{i-} = \left\lbrace \mu_{k'}\right\rbrace_{\forall k' \ne k} \bigcup \left\lbrace \mu^{i-}_k \right\rbrace$. Since the vaccine guarantees perfect immunity, the probability of a degree-$k$ individual $i$ eventually getting infected is the product of the conditional probability that it is infected given it is not vaccinated (NV) and the probability that it is NV. Given its own vaccine adoption level $\mu^{i+}_k$, and all others' decisions $\boldsymbol{\mu}^{i-}$, the infection probability of $i$ is:
\begin{equation*}
\begin{split}
& Pr\left\lbrace \mbox{Infected } \big| \mu^{i+}_k, \boldsymbol{\mu}^{i-}, k \right\rbrace \\
= &  Pr\left\lbrace \mbox{Infected } \big| \mbox{NV},  \mu^{i+}_k, \boldsymbol{\mu}^{i-} , k \right\rbrace Pr\left\lbrace \mbox{NV} \big| \mu^{i+}_k, \boldsymbol{\mu}^{i-}, k  \right\rbrace \\
= & Pr\left\lbrace \mbox{Infected } \big| \mbox{NV}, \boldsymbol{\mu}^{i-}, k \right\rbrace Pr\left\lbrace \mbox{NV} \big| \mu^{i+}_k, \boldsymbol{\mu}^{i-}, k  \right\rbrace, \forall k = 1, \dots, K.\end{split}
\end{equation*}
We assume $f(\cdot)$ is a function that captures the psychological penalty caused by vaccine-phobia. Given people's vaccine-phobia level, the penalty function $f$ is identical for every individual, strictly increasing w.r.t. $\mu_k$, and satisfying $f(0) = 0$.
We assume that the total expected penalty (or disutility) for a degree-$k$ node $i$ is a linear combination of its own psychological penalty and infection risk, as follows:
\begin{equation}
E^i_k\left[ \mu^{i+}_k, \boldsymbol{\mu}^{i-} \right] = \alpha_1 f(\mu^{i+}_k) + \alpha_2 Pr\left\lbrace \mbox{Infected } \big| \mu^{i+}_k, \boldsymbol{\mu}^{i-}, k \right\rbrace, \forall k = 1, \dots, K,
\end{equation}
where $\alpha_1, \alpha_2 \ge 0$ are weights.

In a large population, the vaccination decision of one individual does not significantly affect the overall vaccination coverage, i.e., $\boldsymbol{\mu}^{i-}$ is approximately the same for all $i$'s, and hence  approximated by $\boldsymbol{\mu}$. The expected penalty of $i$ can be written as:
\begin{equation}
\begin{split}
& E_k^i\left[ \mu^{i+}_k, \boldsymbol{\mu}^{i-} \right] \\ \approx & \alpha_1 f(\mu^{i+}_k) + \alpha_2  Pr\left\lbrace \mbox{Infected } \big| \mbox{NV}, \boldsymbol{\mu}, k \right\rbrace Pr\left\lbrace \mbox{NV} \big| \mu^{i+}_k, \boldsymbol{\mu}, k \right\rbrace, \forall  k = 1, \dots, K. \label{eq: expanded utility function}
\end{split}
\end{equation}

Since all not-vaccinated degree-$k$ nodes are probabilistically identical, the probability for one of them to get infected shall be simply the following ratio: 
\begin{equation}
Pr\left\lbrace \mbox{Infected } \big| \mbox{NV}, \boldsymbol{\mu}, k \right\rbrace  = \frac{R_k(\infty, \boldsymbol{\mu})}{R_k(\infty, \boldsymbol{\mu}) + S_k(\infty, \boldsymbol{\mu})}, \forall k = 1, \dots, K.
\end{equation}
Moreover, we need to find out the probability of $i$ not getting vaccinated, $Pr\left\lbrace \mbox{NV} \big| \mu^{i+}_k, \boldsymbol{\mu}, k \right\rbrace$. We first denote the probability of node $i$ being in status $X$ at time $t$ as $X_k^{i}(t)$, $\forall X \in \{S,I,R,V,A\}$. Conditional on that $i$ is susceptible at time $t$, the vaccination rate is $v_k^{i+}$, and the infection rate is the same with all other degree-$k$ susceptible nodes. Therefore, the dynamics of $S_k^{i+}$ follow a similar form as Eq. \eqref{eq: dot S}:
\begin{equation}
\dot{S}_k^{i} =\left( - v_k^{i+} - k r \frac{\phi}{\theta} \right)  S_k^i, \forall t \geq 0, k = 1, \dots, K.
\end{equation}
We easily obtain $S_k^{i}(t) = \theta^{k + \mu_{k}^{i+}}$ as a solution. The probability that $i$ eventually being vaccinated or activated is $\int_0^{\infty} v_{k}^{i+} S_k^{i} dt$, which leads to the following equation:
\begin{equation}
Pr\left\lbrace \mbox{NV} \big| \mu^{i+}_k, \boldsymbol{\mu}, k \right\rbrace = \frac{k + \mu_{k}^{i+} \left( \theta_{\infty} \left(\boldsymbol{\mu} \right) \right)^{k + \mu_{k}^{i+}} }{ k + \mu_{k}^{i+}}, \forall \mu^{i+}_k, \boldsymbol{\mu} \geq 0, k = 1,\dots, K. \label{eq: prob nv}
\end{equation}
Eq. \eqref{eq: prob nv} is a strictly decreasing and strongly convex function with respect to $\mu_k^{i+}$ for any $\theta_{\infty} \in (0,1)$, which can be verified from its first and second order derivatives. If $f(\mu_{k}^{i+})$ is convex with respect to $\mu_{k}^{i+}$, Eq. \eqref{eq: expanded utility function} is also strongly convex with respect to $\mu_{k}^{i+}$ for any $\alpha_1, \alpha_2 > 0$. In this case, there exists a unique $\mu_k^{i+*}$ that minimizes the expected penalty for $i$, and it is the \textit{best response} of individual $i$ given all others' decisions:
\begin{equation}
B_i\left( \boldsymbol{\mu}^{i-} \right)  = \mu_k^{i+*} = \argmin_{\mu_k^{i+}} E_k^i \left[\mu^{i+}_k, \boldsymbol{\mu} \right], \forall  k = 1, \dots, K. 
\end{equation}
These results naturally lead to the following proposition, which highlights the Nash Equilibrium among the subpopulation with the same degree.

\begin{proposition} \label{prop: same rate across same degree}
	For any fixed average vaccination rates of all other subpopulations, $\mu_{k'} r \phi/\theta, \forall k' \ne k$, and $\forall \alpha_1, \alpha_2 > 0$, if the penalty function for vaccination $f(x)$ is convex, the vaccination rates among all degree-$k$ nodes should be equal at Nash Equilibrium; i.e., $B_i \left( \boldsymbol{\mu}^{i-}\right) = B_j \left( \boldsymbol{\mu}^{j-}\right)$ if $i$ and $j$ both have degree $k, \forall k = 1, \dots, K$. 
\end{proposition}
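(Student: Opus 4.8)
The plan is to recognize the statement as a symmetry-plus-uniqueness corollary of the best-response analysis already carried out immediately above the proposition, rather than as a fresh optimization problem. The central observation is that, under the large-population approximation $\boldsymbol{\mu}^{i-}\approx\boldsymbol{\mu}$, the expected penalty $E_k^i$ in \eqref{eq: expanded utility function} depends on the individual index $i$ \emph{only} through its own decision $\mu_k^{i+}$: both the conditional infection probability $Pr\{\mathrm{Infected}\mid\mathrm{NV},\boldsymbol{\mu},k\}$ and the final-state quantity $\theta_\infty(\boldsymbol{\mu})$ that enters \eqref{eq: prob nv} are functions of the aggregate profile $\boldsymbol{\mu}$ alone, and are therefore common to every degree-$k$ node.

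First I would fix the aggregate $\boldsymbol{\mu}$---which, by hypothesis, pins down $\mu_{k'}$ for all $k'\neq k$---and collapse the objective into a single scalar function of the own decision: writing $C_k:=\alpha_2\,Pr\{\mathrm{Infected}\mid\mathrm{NV},\boldsymbol{\mu},k\}$, expression \eqref{eq: expanded utility function} becomes $g_k(\mu_k^{i+})=\alpha_1 f(\mu_k^{i+})+C_k\,Pr\{\mathrm{NV}\mid\mu_k^{i+},\boldsymbol{\mu},k\}$, whose coefficients $\alpha_1,C_k$ and whose dependence on $\theta_\infty(\boldsymbol{\mu})$ are identical across all degree-$k$ individuals. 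Second, I would invoke the strong convexity already established just below \eqref{eq: prob nv}: since $Pr\{\mathrm{NV}\mid\cdot\}$ is strongly convex in $\mu_k^{i+}$ for $\theta_\infty\in(0,1)$ and $f$ is convex, $g_k$ is strongly convex for any $\alpha_1,\alpha_2>0$ and hence attains its minimum over $[0,\infty)$ at a single point. Third---the crux---because $g_k$ is literally the \emph{same} function for every degree-$k$ node, its unique minimizer is common to all of them; that is, $B_i(\boldsymbol{\mu}^{i-})=\argmin_{\mu_k^{i+}\ge 0} g_k=B_j(\boldsymbol{\mu}^{j-})$ for any two degree-$k$ individuals $i,j$, which is precisely the equality asserted at Nash Equilibrium, since at equilibrium each node plays this common best response.

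The hard part will be justifying, and stating explicitly, the large-population approximation $\boldsymbol{\mu}^{i-}\approx\boldsymbol{\mu}$ on which the entire argument pivots. It is this approximation that strips the individual identity out of $\theta_\infty$ and $C_k$ and renders the degree-$k$ subgame symmetric; without it, a unilateral deviation by $i$ would perturb the aggregate final state and leave a residual $i$-dependence in the best-response problem, so that the objectives faced by distinct degree-$k$ nodes would differ slightly. I would therefore frame the conclusion in the mean-field (continuum-population) sense, where a single player's influence on the aggregate is negligible; granting this, the remaining steps are the routine symmetry and uniqueness arguments sketched above.
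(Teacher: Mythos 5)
Your proposal is correct and follows essentially the same route as the paper: the paper offers no separate proof environment for this proposition, deriving it exactly as you do from the discussion preceding it — the mean-field approximation $\boldsymbol{\mu}^{i-}\approx\boldsymbol{\mu}$ makes the objective in Eq.~\eqref{eq: expanded utility function} identical for all degree-$k$ nodes, and the strong convexity of Eq.~\eqref{eq: prob nv} together with convexity of $f$ yields a unique common minimizer, i.e., the shared best response. Your explicit flagging of the large-population approximation as the load-bearing (and only heuristically justified) step is a fair and accurate reading of the paper's own treatment.
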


Proposition \ref{prop: same rate across same degree} states that the optimal vaccination rate of nodes with the same degree should be equal under equilibrium. This is as well intuitive because nodes with the same degree are probabilistically identical.  The following proposition further speaks to the existence and uniqueness of such a solution.

\begin{proposition} \label{prop: equilibrium condition}
	(Sufficient condition) Suppose $f(x)$ is a convex function and Proposition \ref{prop: same rate across same degree} holds. The solution to following equations:
	\begin{align}
	& \alpha_1 \frac{\partial f(\mu^*_k) }{\partial \mu_k }  + \alpha_2 \frac{R_k(\infty, \boldsymbol{\mu}^*)}{R_k(\infty, \boldsymbol{\mu}^*) + S_k(\infty, \boldsymbol{\mu}^*)} \frac{\partial }{\partial \mu_k } \left( \frac{ k + \mu_k^* \theta^{k + \mu_k^*}_\infty}{ k + \mu_k^* }  \right)  = 0, k = 1, \dots, K, \label{eq: equilibrium conditions}\\
	& \boldsymbol{\mu}^* \geq 0 , \label{eq: non-neg}
	\end{align}
	denoted $\boldsymbol{\mu}^* = \left\lbrace \mu_{k}^* \right\rbrace_{\forall k} $, exists and characterizes the vaccination decisions among the entire population under Nash Equilibrium. Moreover, if $f'(0) = 0$ and $f$ is strongly convex, there always exists a unique $\boldsymbol{\mu}^*$ that satisfies Eq. \eqref{eq: equilibrium conditions} and \eqref{eq: non-neg}.
\end{proposition}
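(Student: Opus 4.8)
The plan is to read the equilibrium as a fixed point of the joint best-response map and to treat the two claims separately: establish existence by a topological fixed-point theorem under $f$ merely convex, then upgrade to uniqueness under the stronger hypotheses $f'(0)=0$ and strong convexity by collapsing the whole system onto a single scalar self-consistency equation in $\theta_\infty$.

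For existence, by Proposition \ref{prop: same rate across same degree} I may restrict to symmetric profiles and regard the decision as the vector $\boldsymbol{\mu}=\{\mu_k\}$. First I would confine the search to a compact convex box $\Omega=\prod_k[0,\bar\mu_k]$: because the conditional infection probability is at most one while $f$ is strictly increasing, for each $k$ the marginal penalty $\alpha_1 f'(\mu_k)$ eventually dominates the bounded marginal benefit $\alpha_2\,Pr\{\text{Inf}|\text{NV}\}\,\partial_{\mu_k}Pr\{\text{NV}\}$, so no best response exceeds some finite $\bar\mu_k$. Next I would define the best-response map $B(\boldsymbol{\mu})=\{B_k(\boldsymbol{\mu})\}$, where each $B_k$ minimizes \eqref{eq: expanded utility function} in the own variable. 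The text already shows this minimizer is unique because \eqref{eq: expanded utility function} is strongly convex in $\mu_k^{i+}$; what remains is continuity of $B$, which I would obtain from Berge's maximum theorem once the objective is shown to be jointly continuous. Applying Brouwer's fixed-point theorem to the continuous self-map $B$ on the compact convex $\Omega$ then yields $\boldsymbol{\mu}^*=B(\boldsymbol{\mu}^*)$, i.e. a Nash equilibrium.

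The one ingredient that makes continuity work, and that I would isolate as a lemma, is that $\theta_\infty(\boldsymbol{\mu})$ depends continuously on $\boldsymbol{\mu}$. This follows from the implicit function theorem applied to \eqref{eq: theta_infty}: the right-hand side is smooth in $(\theta_\infty,\boldsymbol{\mu})$ and, on the branch located by the monotone search from $1$ down to $0$, the fixed-point equation crosses the diagonal transversally, so the root is a $C^1$ function of the $\mu_k$. Continuity of $\theta_\infty$ propagates, through the closed-form $S_k(\infty)$, $R_k(\infty)$ and hence $Pr\{\text{Inf}|\text{NV},\boldsymbol{\mu},k\}$, to joint continuity of the objective. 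To close the existence claim I would record the characterization: since each player's objective is strongly convex in its own variable, $\mu_k^*$ is the global minimizer iff it satisfies the KKT conditions for $\mu_k\ge 0$, which are exactly \eqref{eq: equilibrium conditions}--\eqref{eq: non-neg} (the stationarity holding with equality at any interior coordinate, convexity guaranteeing sufficiency); thus the stated system both admits a solution and characterizes every equilibrium. For uniqueness under $f'(0)=0$ and strong convexity, I would first show every equilibrium is interior: the left-hand side of \eqref{eq: equilibrium conditions} at $\mu_k=0$ equals $\alpha_1 f'(0)+\alpha_2 Pr\{\text{Inf}|\text{NV}\}\,\partial_{\mu_k}Pr\{\text{NV}\}\big|_{0}$, strictly negative whenever epidemic risk is positive (since $f'(0)=0$ and $Pr\{\text{NV}\}$ is strictly decreasing by \eqref{eq: prob nv}), so the minimizer is strictly positive (the degenerate regime $\theta_\infty=1$ gives the trivial unique solution $\boldsymbol{\mu}^*=0$ and is handled directly). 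The key structural fact is that all coupling runs through the single scalar $\theta_\infty$: given a candidate $s=\theta_\infty$, both $Pr\{\text{Inf}|\text{NV},\boldsymbol{\mu},k\}=R_k(\infty)/(R_k(\infty)+S_k(\infty))$ and the factor in \eqref{eq: prob nv} are functions of $(s,\mu_k)$ alone, so the $k$-th stationarity equation determines $\mu_k=m_k(s)$, and substituting into \eqref{eq: theta_infty} yields a one-dimensional self-consistency equation $s=\Phi(s)$ on $(0,1)$. I would then prove $\Phi$ is strictly monotone (equivalently $s-\Phi(s)$ has a single sign change), giving at most one fixed point, which combined with existence pins down a unique $\boldsymbol{\mu}^*$.

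The main obstacle is this monotonicity step, and strong convexity is exactly what I expect to need. Two sign-control issues arise. First, the inner map $s\mapsto m_k(s)$ is only implicitly defined, and because $Pr\{\text{Inf}|\text{NV}\}$ itself depends on $\mu_k$, its well-definedness and smoothness require the own-variable derivative of the stationarity residual to keep a definite sign; here $f''>0$ together with the strong convexity of \eqref{eq: prob nv} should dominate the cross term arising from $\partial_{\mu_k}Pr\{\text{Inf}|\text{NV}\}$. Second, even after $m_k(s)$ is controlled, showing that $\Phi$ is monotone requires tracking how an increase in $\theta_\infty$ (lower perceived risk) lowers each $m_k(s)$ and how that feeds back through \eqref{eq: theta_infty}; the composite sign is plausible on economic grounds but must be verified from the explicit derivatives of the closed-form final-state quantities. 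I would carry out these derivative estimates as the technical heart of the proof, isolating them as a monotonicity lemma.
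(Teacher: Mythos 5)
Your uniqueness argument is, at its core, the same reduction the paper uses: fix a candidate value $\tilde{\theta}_\infty$, solve each degree-$k$ stationarity condition for a unique $\tilde{\mu}_k(\tilde{\theta}_\infty)$, and close the loop through the scalar consistency equation \eqref{eq: theta_infty}. Where you diverge is in how the decisive monotonicity is obtained, and in the existence argument. The paper does not need the heavy ``derivative estimates'' you flag as the technical heart: it observes (i) that each $\tilde{\mu}_k(\cdot)$ is continuous and \emph{strictly decreasing} in $\tilde{\theta}_\infty$ (implicit differentiation of the fixed-$\tilde{\theta}_\infty$ first-order condition, using $f'(0)=0$ and strong convexity to guarantee a unique, well-behaved root), and (ii) that $\theta_\infty(\boldsymbol{\mu})$ is continuous and \emph{increasing} in every $\mu_k$; the composite map $\Phi(s)=\theta_\infty\bigl(\{\tilde{\mu}_k(s)\}_{\forall k}\bigr)$ is therefore decreasing, so $s-\Phi(s)$ is strictly increasing on $\bigl[\theta_\infty(\mathbf{0}),1\bigr)$ and crosses zero exactly once --- giving existence and uniqueness simultaneously from one intermediate-value argument. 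On existence under mere convexity, your Brouwer/Berge construction is extra machinery the paper does not use (the paper treats the first claim as following from the equilibrium definition and Proposition \ref{prop: same rate across same degree}); it is a legitimate and arguably more careful route, but it inherits an obligation you only gesture at: continuity of the \emph{selected} root of \eqref{eq: theta_infty} as a function of $\boldsymbol{\mu}$. Roots of that equation can merge, and the transversality of ``the branch located by the monotone search'' is asserted, not proved, so your Berge step is not yet closed.

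One genuine flaw needs fixing in the uniqueness part: you claim that proving $\Phi$ ``strictly monotone'' yields at most one fixed point, and call this equivalent to $s-\Phi(s)$ having a single sign change. That is false for an increasing $\Phi$: a strictly increasing map can cross the diagonal arbitrarily many times. The argument only closes if you establish the correct direction, namely that $\Phi$ is strictly \emph{decreasing}, which is exactly what the paper's two monotonicity facts deliver (a higher $\tilde{\theta}_\infty$ lowers every $\tilde{\mu}_k$, and lower vaccination in turn lowers $\theta_\infty$). Your own economic heuristic points the same way, so this is a repairable slip rather than a wrong approach, but as literally stated your monotonicity lemma would not suffice to conclude uniqueness.
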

\begin{proof}
	Eq. \eqref{eq: equilibrium conditions} are simply the first order conditions of an individual's objective function, which come naturally with the equilibrium definition and Proposition \ref{prop: same rate across same degree}. To show the existence and uniqueness of the equilibrium, we first observe that for any $\boldsymbol{\mu} \geq 0$, and any fixed $\tilde{\theta}_\infty \in \left[ \theta_\infty\left( \mathbf{0}\right),1\right)$, where $\theta_\infty\left( \mathbf{0}\right)$ is the final state of $\theta$ under $\boldsymbol{\mu} = \mathbf{0}$, there always exists a unique solution $\tilde{\mu}_k$ to
	\begin{equation}
	\alpha_1 \frac{\partial f(\tilde{\mu}_k) }{\partial \mu_k }  + \alpha_2 \frac{k - k \tilde{\theta}_{\infty}^{k+\mu_k}}{k + \mu_{k} \tilde{\theta}^{k+\mu_k }_\infty} \frac{\partial }{\partial \mu_k } \left( \frac{ k + \tilde{\mu}_k \tilde{\theta}^{k + \tilde{\mu}_k}_\infty}{ k + \tilde{\mu}_k }  \right)  = 0, k = 1, \dots, K. \label{eq: first order condition at tilde theta}
	\end{equation} 
	This is because while $f'(\mu_k)$ is an strictly increasing function from 0 (at $\mu_k = 0$), the second term is also strictly increasing with a negative starting value at $\mu_k = 0$, and it approaches 0 when $\mu_k \rightarrow \infty$. Therefore, there always exists a unique solution to Eq. \eqref{eq: first order condition at tilde theta}, and we can write it as a function of $\tilde{\theta}_\infty$, namely $\tilde{\mu}_k \left( \tilde{\theta}_\infty \right), k = 1,\dots, K$. 
	
	Moreover, by taking partial derivatives on Eq. \eqref{eq: first order condition at tilde theta} over $\tilde{\theta}_\infty$, we see that $\tilde{\mu}_k $ should be continuous and strictly decreasing with $\tilde{\theta}_\infty$, $\forall k = 1,\dots, K$. Furthremore, $\theta_\infty(\boldsymbol{\mu})$ is continuous and monotonically increasing with respect to $\mu_k, \forall k = 1, \dots, K$. Therefore, there must exists a unique $\tilde{\theta}_\infty \in \left[ \theta_\infty\left( \mathbf{0}\right),1\right) $, such that $\theta_\infty \left( \boldsymbol{\mu}^* \right) = \tilde{\theta}_\infty$, where $\boldsymbol{\mu}^* = \left\lbrace \tilde{\mu}_k \left( \tilde{\theta}_\infty \right) \right\rbrace_{\forall k} $.
\end{proof}

Proposition \ref{prop: equilibrium condition} characterizes the vaccination game equilibrium condition using a simple first order condition, and proposes a sufficient condition for the existence and uniqueness of the equilibrium. The requirements that $f'(0) = 0$ and $f$ be strongly convex are essentially stating that the marginal psychological penalty of vaccination increases with the vaccine adoption level, which could be easily satisfied in many real-world cases.\footnote{The convexity of the penalty function is analogous to an individual's risk aversion behavior in general, which is represented by a concave utility function, or equivalently a convex disutility function.} 

It is not possible to directly obtain an analytical solution to the equilibrium, however, because $ \theta_{\infty}$ lacks an explicit form in terms of $\boldsymbol{\mu}^*$, and requires solving the non-linear equation Eq. \eqref{eq: theta_infty}. In the next subsection, we utilize Proposition \ref{prop: equilibrium condition} to design a heuristic algorithm that solves for the equilibrium point. 

\subsection{Solution Algorithm}\label{sec:algorithm}
The basic idea of the heuristic solution algorithm is to iteratively find the optimal vaccine adoption level for a constant value of $\theta_{\infty}$, and then update $\theta_{\infty}$ with the new vaccination rate. Given $\theta_{\infty}$ in each iteration, the problems are convex so the optimal vaccination rates are found via a bisection search; then $\theta_{\infty}$ is updated by solving Eq. \eqref{eq: theta_infty}. The algorithm is described as follows:
\par
\textbf{Algorithm 1}
\begin{itemize}[noitemsep, topsep = 0pt]
	\item \textbf{Step 0}: Set $n = 0$, $\boldsymbol{\mu}^{(n)} = \mathbf{0}$. Solve for $ \theta_\infty^{(n)} = \theta_\infty\left( \boldsymbol{\mu}^{(n)}\right) $ and $R_k\left( \infty, \boldsymbol{\mu}^{(n)}\right) $ with Eq. \eqref{eq: theta_infty} and \eqref{eq: special final Rk}, respectively, and solve for $\hat{\mu}_k$ such that $\alpha_1 f\left( \hat{\mu}_k \right)  = \alpha_2 R_k\left( \infty, \boldsymbol{\mu}^{(n)}\right)$, $\forall k$;
	\item \textbf{Step 1}: Use $0$ and $\hat{\mu}_k$ as the lower and upper bounds, respectively, and perform bisection search to find the solution $\mu^{(n+1)}_k$ to the following equations:
	\begin{equation}
	\alpha_1 \frac{\partial f\left(  \mu^{(n+1)}_k \right) }{\partial \mu_k }  + \alpha_2 \frac{k - k \left( \theta_{\infty}^{(n)}\right)^{k+\mu_k^{(n)}}}{k + \mu_{k}^{(n)} \left( \theta_\infty^{(n)}\right) ^{k+\mu_k^{(n)}}} \frac{\partial }{\partial \mu_k } \left( \frac{ k + \mu^{(n+1)}_k \left( \theta_\infty^{(n)} \right)^{k + \mu^{(n+1)}_k }}{ k + \mu^{(n+1)}_k } \right) = 0, k = 1, \dots, K. \label{eq: solution algorithm}
	\end{equation}
	\item \textbf{Step 2}: Solve for $\theta_\infty^{(n+1)}$ using Eq. \eqref{eq: theta_infty} with $\boldsymbol{\mu}^{(n+1)}$;
	\item \textbf{Step 3}: If any of the following conditions are satisfied, terminate and return $\boldsymbol{\mu}^{(n+1)}$; otherwise, let $n \leftarrow n + 1$ and go to Step 1. The termination conditions include:
	\begin{itemize}
		\item Maximum iteration number is reached; 
		\item $\left|\boldsymbol{\mu}^{(n+1)} - \boldsymbol{\mu}^{(n)} \right| \leq \epsilon$, where $\epsilon$ is a user-defined convergence threshold; and
		\item $\left|\theta_\infty^{(n+1)} - \theta_\infty^{(n)} \right| \leq \epsilon$.
	\end{itemize}
\end{itemize}

In each iteration, we need to solve $K$ nonlinear equations as suggested by Eq. \eqref{eq: solution algorithm}, which might appear formidable when $K$ is large. Luckily, existing research show that real-life contact networks are more suitably described by degree distributions without a heavy tail \citep{kossinets2006empirical, srinivasan2006analysis}, which means the probability of $k$ exceeding a finite large number can be neglected. Moreover, bisection search solves each one of the one-dimensional nonlinear equations fairly efficiently. 

Although theoretical convergence of the above algorithm is not guaranteed, it is not difficult to see that the algorithm terminates when the equilibrium point is reached. In fact, good convergence performances are observed through intensive experiments in many realistic Nash game contexts. In addition, we also found that the convergence and/or stability of the algorithm is rather insensitive to the step size. 

\section{Other Vaccination Schemes} \label{sec: other schemes}
In the dynamic vaccination game, we have implicitly assumed that people vaccinate themselves during the course of the epidemic outbreak (i.e., called \textit{delayed vaccination}), and people make different decisions based on degree heterogeneity (i.e., called \textit{heterogeneous vaccination}). Note, however, that in the real world, people may also choose vaccination prior to the onset of the outbreak (which will be called \textit{early vaccination}),  and/or enforce a vaccination rate indiscriminatively (e.g. vaccination programs enforced by a public health agency, which will be called \textit{homogeneous vaccination}). While it is expected that early vaccination should outperform its delayed counterpart, we aim to quantitively measure the impacts caused by the delay and in so doing provide useful insights for decision makers. 

To evaluate the impacts of these alternative vaccination options, we compare the following three schemes (I) early and homogeneous vaccination; i.e., prior to the disease outbreak, a fraction of $M = V \left( \infty \right)  + A\left( \infty \right)$ population are randomly vaccinated regardless of their degrees; (II) early and heterogeneous vaccination; i.e., prior to the disease outbreak, a $\left( A_k(\infty) + V_k(\infty) \right)/p_k$ fraction among the degree-$k$ population are vaccinated, $\forall k = 1,\dots , K$; and (III) delayed and homogeneous vaccination; i.e., during the epidemic propagation process, the population are vaccinated homogeneously following the same total rate as the vaccination game, i.e., there is a fraction of $\sum_k v_k(t) S_k(t) dt$ susceptible individuals vaccinated in the time interval $[t, t + dt)$, $\forall t \geq 0$. 
In each of these schemes, the total number of individuals eventually receiving vaccination is set to be the same as that of the dynamic vaccination game in the previous section. However, the disease propagation process and the final epidemic size may be different. 

We first follow \cite{newman2002spread} to compute the final epidemic size of Scheme I. The vaccination scheme is equivalent to randomly removing a fraction of $M$ nodes from the network before the disease starts to propagate. After removing these nodes, not only the total number of nodes, but also the degree distribution in the remaining network changes. To find the new degree distribution in the remaining network, we note that randomly removing $M$ fraction of nodes from the network is equivalent to removing $M$ fraction of the stubs. The probability of a degree-$k$ node having $m$ stubs connected to a remaining stub is given by a binomial distribution with probability $\binom{k}{m} (1-M)^m M^{k-m}$. Then the degree generating function of the remaining nodes, denoted as $g_1(x)$, is:
\begin{equation}
\begin{split}
g_1(x) & = \sum_{m} x^m \sum_{k} p_k \binom{k}{m} (1-M)^m M^{k-m} \\
& = \sum_{k} p_k \sum_{m} \binom{k}{m} \left( x(1-M) \right) ^m M^{k-m} \\
& = \sum_{k} p_k \left( x(1-M) + M \right)^k \\
& = g\left( x(1-M) + M\right).
\end{split} \label{eq: g1x}
\end{equation}
The same result is also derived in \cite{buldyrev2010catastrophic}.

When the infectious disease starts to propagate in the network, there are two possibilities: (i) there are only local outbreaks such that the epidemic size does not scale with the population; and (ii) there is a pandemic outbreak such that a single disease seed leads to a giant connected subnetwork of infected nodes. \cite{newman2002spread} presents the critical condition on the disease transmissibility to determine whether a pandemic outbreak could happen; i.e., a pandemic outbreak may occur only when $T > T_c$, where
\begin{equation}
T_c = \frac{g_1'(1)}{g_1''(1)} = \frac{g'(1)}{ \left( 1 - M \right) g''(1)}. \label{eq: Tc}
\end{equation}
If there is no pandemic outbreak, the average number of infected nodes, is given by:
\begin{equation}
s_0  = 1 + \frac{T g_1'(1)}{1 - T g_1''(1) / g_1'(1)} = 1 + \frac{T \left( 1 - M \right) g'(1) }{ 1  - T  \left( 1 - M \right) g''(1) / g'(1) }. \label{eq: s0}
\end{equation}
Otherwise, if the pandemic outbreak occurs, the fraction of the giant infected subnetwork out of the unvaccinated population is given by $1 - g_1(1 + (y-1) T)$, where $y$ is the solution to the following self-consistency relation:
\begin{equation}
y = \frac{1}{g_1'(1)} g_1'(1 + (y-1)T).
\end{equation}
Then the following equation yields the pandemic size in terms of fraction among the original population:
\begin{equation}
s_1 = (1 - M) \left( 1 - g_1\left( 1 + (y-1) T\right) \right) . \label{eq: final epidemic size vac before epidemics}
\end{equation}
Moreover, the average outbreak size among the nodes not belong to be giant subnetwork is still given by Eq. \eqref{eq: s0}.

For Scheme II, the population is heterogeneously vaccinated. Although there exists a similar analytical approach in finding the final epidemic size, e.g. as in \cite{huang2011robustness}, it requires the vaccination probability of nodes to follow a certain special form. Therefore, in this paper, we will simply use stochastic simulations to compute the system evolution and the final epidemic size. 
Finally, we note that Scheme III is a special case of dynamic vaccination. Hence, the outcome of Scheme III can be easily computed using the developed SIRVA dynamics model.

\section{Numerical Results}\label{sec:numerical}
In this section, we first validate the proposed SIRVA dynamic model (Eqs. \eqref{eq: dot theta natural}, \eqref{eq: dot R natural}, and \eqref{eq: fixed popualtion size} - \eqref{eq: A}) as well as the derived final epidemic size (Eqs. \eqref{eq: P} and \eqref{eq: theta_infty}), by comparing these analytical formulas with agent-based simulation outcomes. Then we present the results of the vaccination game and evaluate the impacts of delayed and homogeneous vaccinations. Sensitivity analysis is then conducted to reveal interesting insights. 

The benchmark case involves a population with a Poisson degree distribution and a hypothetical epidemic event with moderate disease transmissibility. Poisson distribution is chosen because it does not possess heavy-tailed properties and has been widely recognized in the existing literature \citep{newman2001random, barthelemy2005dynamical, volz2008sir}. The average degree $g'(1) = 7$, with a cutoff $K = 22$ (such that $P\left( k \geq K\right) \leq 10^{-5}$). The disease parameters are as follows: the infection rate $r = 0.01$, the removal rate $u = 0.01$, and consequently the disease transmissibility $T = 0.5$. Moreover, the initial condition at time $t=0$ is set as follows: $\theta(0) = 1.0$, and $\phi(0) = 0.001$.

\subsection{Model Validation} \label{sec:validation}
In the agent-based stochastic simulations, configuration model random networks are built following \cite{molloy1995critical}, where open edges are generated and assigned to the nodes following the degree distribution, and then randomly paired up with other open edges. The number of nodes (i.e., the population size) is set to be $10^4$; in every simulation, initial disease seeds are randomly located at 10 of these nodes. 

We perform a total of 50 simulations, each with a new realization of network topology and initial disease seeds. In each discretized time step (e.g., a day), disease transmission, node removal, and node vaccination, and node activation are randomly simulated with corresponding probabilities.  
The vaccination rate follows Eq. \eqref{eq: special v}, with $\mu_k = 0.4 k, \forall k$. In the simulations, since $\theta$ and $\phi$ are virtual quantities which cannot be measured directly, we use the following approximation of $\phi/\theta$ to compute $v_k$:
\begin{equation}
r \frac{\phi}{\theta} \approx \frac{\Delta I}{\sum_k k S_k},
\end{equation}
where $\Delta I$ is the fraction of recent infections. 

\begin{figure}[t]
	\centering
	\includegraphics[width=0.9\textwidth]{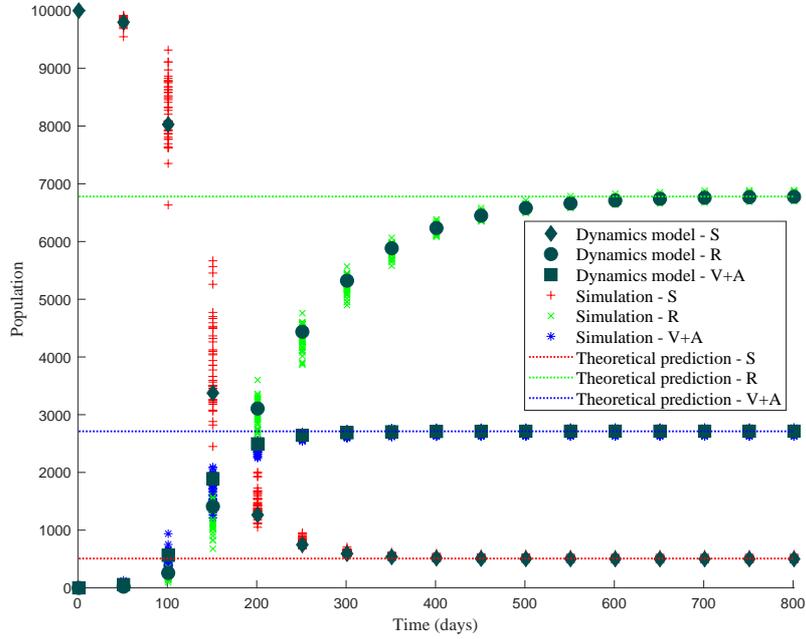}
	\caption{Compare simulation with system dynamic equations and theoretical final states.}
	\label{fig: simulation}
\end{figure}

The simulation results can be compared with those from the system dynamics equations, as well as the analytical formulas \eqref{eq: theta_infty} and \eqref{eq: P}. The colored dots in Figure \ref{fig: simulation} show results from the 50 stochastic simulations, sampled every 50 days. While these simulated samples show some variations in the early stages of the simulation, the final state of the epidemic converges fairly well to the same value across all simulations: the total fraction of immunized population is 0.2687, that of the remaining susceptible is 0.0510, and that of the removed is 0.6803.  The results from system dynamics equation coincide well with the simulation results: the final epidemic size $R(\infty)$ computed from Eq.\eqref{eq: dot theta natural}, \eqref{eq: dot R natural}, and \eqref{eq: fixed popualtion size}-\eqref{eq: A} is 0.6786, which is only 0.3\% different from that of the simulations, and the fraction of remaining susceptible is 0.0501. Despite the stochasticity of the simulations, the SIRVA dynamic model and the simulations agree very well. In fact, we believe the relative difference will further reduce if the population size increases. Furthermore, Eqs. \eqref{eq: theta_infty} and \eqref{eq: P} can be used directly to compute $\theta_\infty = 0.6738$ and the final epidemic size equals 0.6780. The relative error between the theoretical final epidemic size and that of the system equations is negligible regardless of initial condition of $\phi$. These results give us reasonable confidence to use the established systems dynamics models and the derived analytical results to study vaccination games. 

\subsection{Vaccination Game Results}\label{sec: vac game res}
We proceed to present the results of a hypothetical vaccination game with the same population and disease. For convenience, we set the weight factors $\alpha_1=10^{-4}$ and $\alpha_2=1$. The penalty for vaccination is set to be a polynomial function in the form of $f(x) = x^b$ where $b = 2$. As such, $f(x)$ is a monotonically increasing function, and it satisfies the equilibrium uniqueness requirements in Proposition \ref{prop: equilibrium condition}, thus the vaccination game has a unique equilibrium solution. The heuristic algorithm terminates after 11 iterations when $\epsilon$ reaches $1 \times 10^{-4}$.

\begin{figure}[h] 
	\centering
	\begin{subfigure}[b]{0.48\textwidth}
		\includegraphics[width=\textwidth]{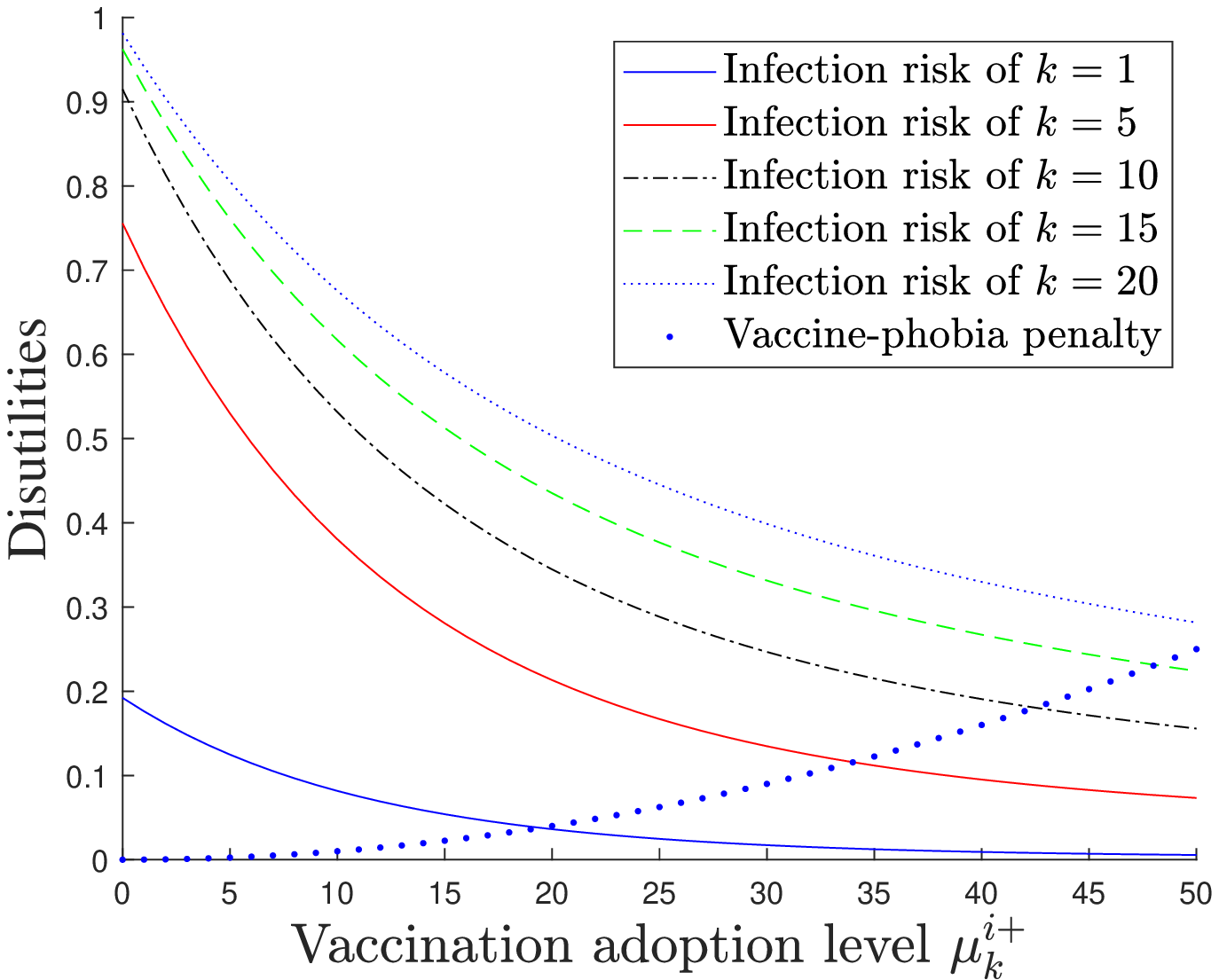}
		\caption{Disutilities w.r.t. $\mu_k^{i+}$}
		\label{fig: disutilities}
	\end{subfigure}
	~ 
	\begin{subfigure}[b]{0.48\textwidth}
		\includegraphics[width=\textwidth]{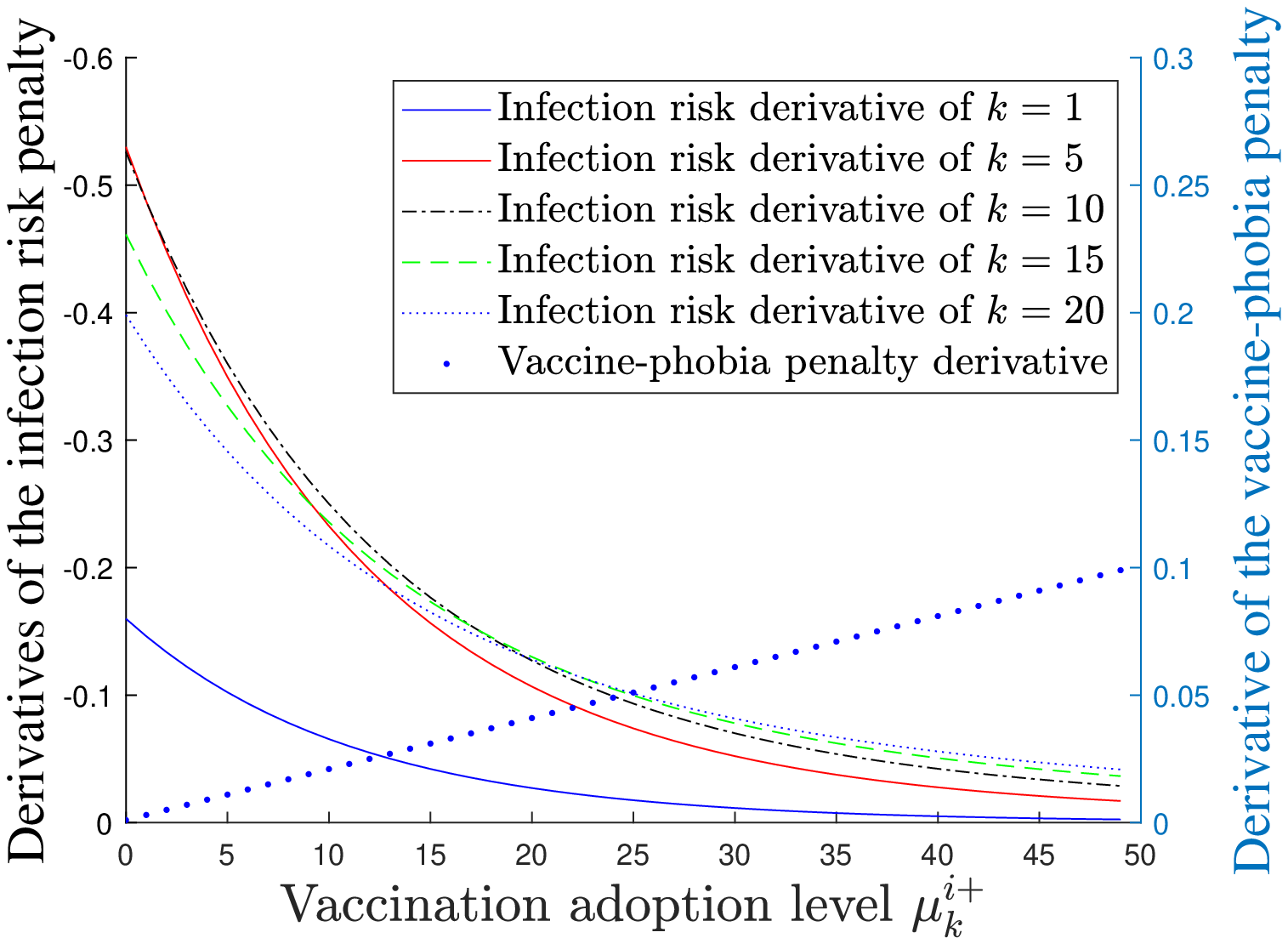}
		\caption{First order derivatives of disutilities w.r.t. $\mu_k^{i+}$}
		\label{fig: derivatives}
	\end{subfigure}
	\caption{Validation of the vaccination game results}\label{fig: validation}
\end{figure}

Figure \ref{fig: validation} presents an individual's disutilities due to vaccine-phobia and infection risk, as well as the first order derivatives of these disutilities at the equilibrium. As expected, the infection risk penalties are strictly decreasing functions of $\mu_k^{i+}$ for all $k$'s, and that of vaccine-phobia is a strongly convex and increasing function; see Figure \ref{fig: disutilities}. Moreover, Figure \ref{fig: derivatives} shows the first order derivatives of these disutilities. For each $k$, the intersection of the vaccination derivative curve and that of the infection is exactly the solution to the equilibrium condition Eq. \eqref{eq: equilibrium conditions}.

\begin{figure}[h] 
	\centering
	\begin{subfigure}[b]{0.48\textwidth}
		\includegraphics[width=\textwidth]{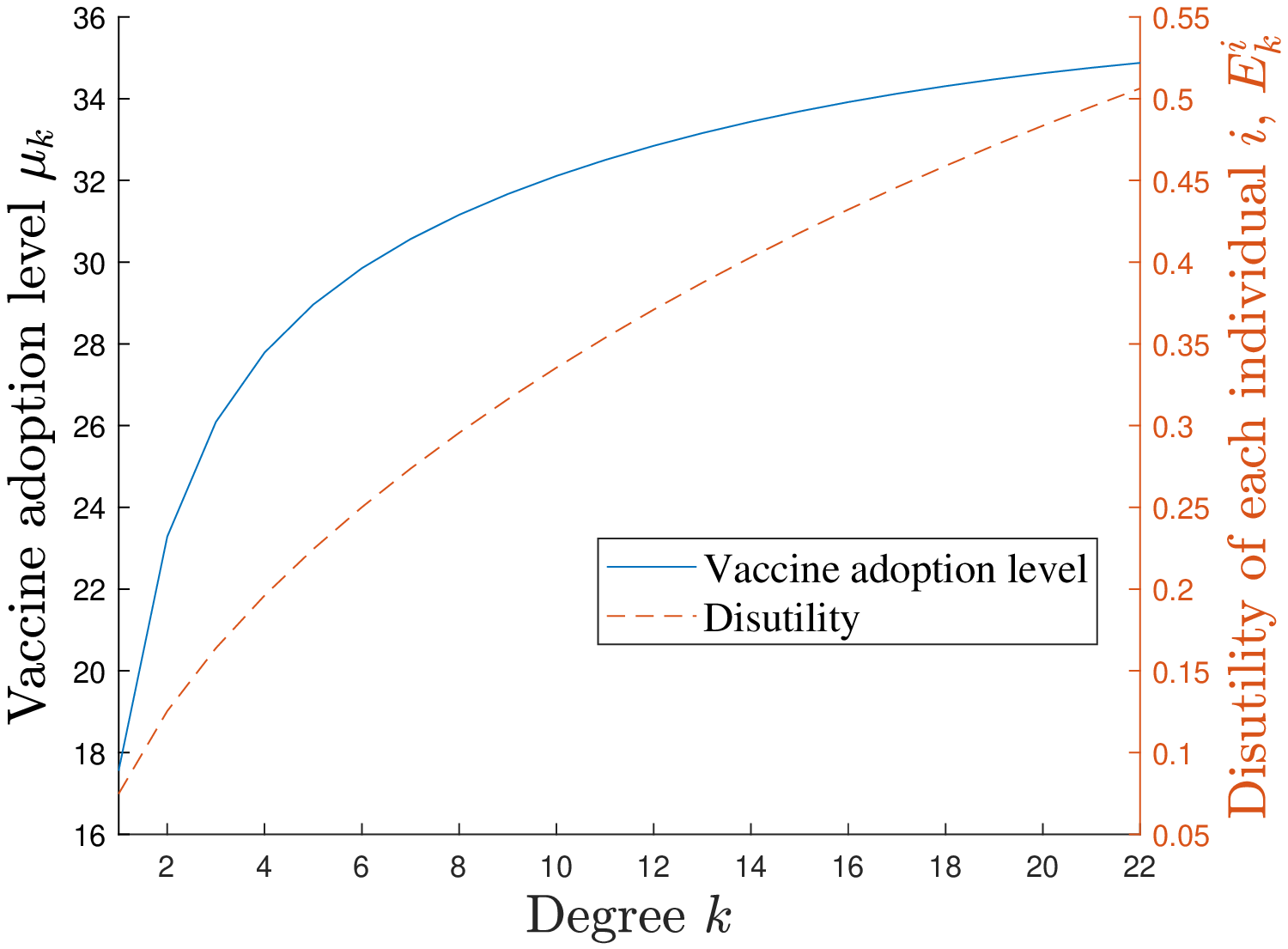}
		\caption{Vaccination rate decisions}
		\label{fig: uk}
	\end{subfigure}
	~ 
	\begin{subfigure}[b]{0.48\textwidth}
		\includegraphics[width=\textwidth]{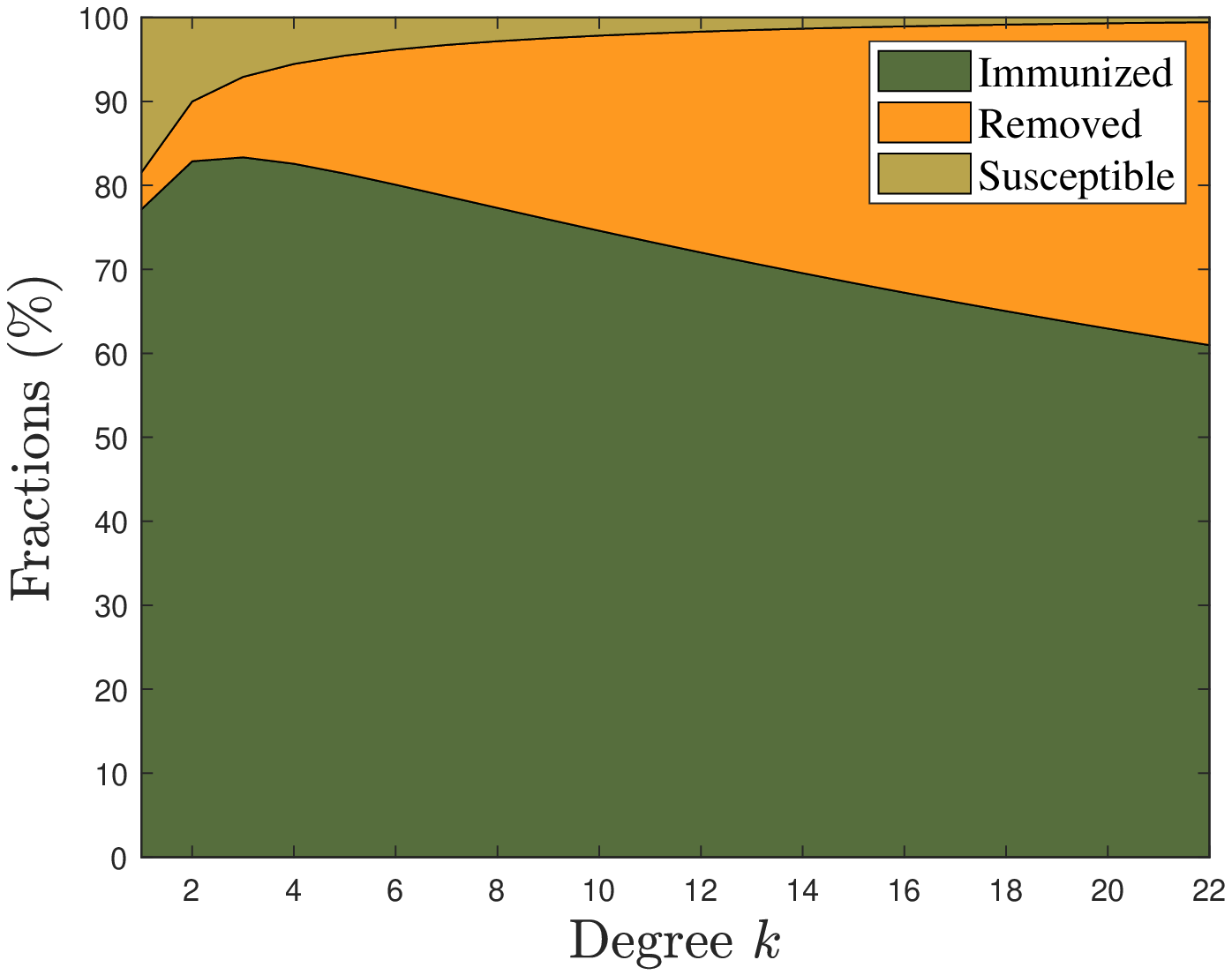}
		\caption{Final states}
		\label{fig: final states}
	\end{subfigure}
	\caption{Vaccination game equilibrium results}\label{fig: vaccination games}
\end{figure}

The equilibrium solutions are presented in Figure \ref{fig: vaccination games}. Figure \ref{fig: uk} shows that the equilibrium vaccination adoption level increases with the node degree; meanwhile, the total disutilities are strictly increasing with $k$. Figure \ref{fig: final states} shows the final state compositions of nodes with different degrees. Interestingly, even though the equilibrium vaccination rate strictly increases with $k$, the probability for a node to eventually get vaccinated first increases but then declines. This is because, the vaccination rate $v_k$ and infection rate are both proportional to $r \phi / \theta$, whereas the multiplicative factor equals $\mu_k$ for vaccination and $k$ for infection. Thus the infection rate grows proportionally to $k$, yet the vaccination rate grows sub-linearly as shown in Figure \ref{fig: uk}. Consequently, the ratio of vaccination rate to infection rate declines as the node degree increases. Implementing the vaccination decisions at the equilibrium in the systems dynamics model \eqref{eq: dot theta natural}, \eqref{eq: dot R natural}, and \eqref{eq: fixed popualtion size}-\eqref{eq: A}, the final epidemic size is 0.1770, with $\theta_\infty = 0.9124$, and a total fraction of 0.7866 of the population are eventually vaccinated. Compared to the results in Sec. \ref{sec:validation} where a relatively low vaccinate rate is adopted, we find that the disease propagation has been significantly mitigated as a result of the vaccination game.

\subsection{Delayed and Homogeneous Vaccinations}
Following Scheme I and considering vaccinating the same fraction 0.7866 of the population (as in the vaccination game) indiscriminatingly before the epidemic outbreak, we find that $T_c = 4.69 > T$ from Eq. \eqref{eq: Tc}. There will only be several endemic infections, and the average outbreak size is $1.8$ from Eq. \eqref{eq: s0}. This means that each disease seed is expected to infect only 0.8 additional susceptible individual during the entire outbreak. Since there are 10 initial disease seeds, and in the large population their neighbors are highly unlikely to overlap, the total infected fraction is expected to be approximately 0.0018.  Scheme II requires that the population be vaccinated heterogeneously based on their degrees prior to the epidemic outbreak. The mean infection fraction over 50 random simulations is 0.0051. 

The comparison between early and delayed vaccinations is consistent with our expectation. In Schemes I and II, the infection fractions are 99.0\% and 97.1\% smaller than that of the vaccination game case, respectively. This shows that, with preventive vaccination ahead of time, the pandemic outbreak is prevented from happening; in contrast, delaying the vaccination allows devastating outbreak to take place and infect a large fraction of the population. This is intuitive -- the higher-degree individuals are more easily infected and then pose larger threats to others once infected, thus vaccinating these individuals before the outbreak can greatly counteract the disease propagation. Furthermore, if we compare Schemes I and II, we find that non-discriminative vaccination is helpful if done prior to epidemic outbreak. Scheme I outperforms Scheme II because the high-degree nodes have higher vaccination probabilities in Scheme I, as implied by Figure \ref{fig: final states}. 

Scheme III, with both delayed and homogeneous vaccination, yields a final epidemic size of 0.1805. Comparing Scheme III and the dynamic vaccination game from Section \ref{sec: vac game res}, we observe that homogeneous vaccination is slightly less effective than heterogeneous vaccination if delayed. This is in sharp contrast to the comparison between Schemes I and II, mainly because the vaccination rate (instead of vaccination probability) of the high degree nodes are higher in the vaccination game as suggested by Figure \ref{fig: uk}, which helps protect other susceptible individuals. 

These results are interesting and insightful. The coverage and timing of vaccination for the high degree nodes almost dictate the outcome of a disease outbreak. As discussed in Section \ ref{sec: vac game res}, in a vaccination game people with more social connections tend to take vaccines sooner. Therefore, it would be more effective to encourage voluntary vaccine uptake than blindly vaccinating the whole population once the epidemic outbreak has started. This finding would be particularly helpful when vaccines are scarce during the course of an epidemic outbreak. However, despite their high vaccine adoption level, the high degree nodes are likely to be quickly infected before they can be vaccinated, as we have also seen in Section \ ref{sec: vac game res}. This facilitates disease propagation and leads to a much more devastating spread of disease. Therefore, it is important to vaccinate as much high degree nodes as possible prior to the outbreak to provide them with timely protection. In this case, non-discriminative vaccination is not a bad option – compared to the voluntary (but delayed) vaccination scheme, the high degree nodes now have a better chance of getting vaccinated in time. Targeting individuals with many social connections would greatly cut the possibility of disease propagation in both early and delayed vaccination schemes. Nevertheless, we shall also notice that such a measure is often not practical and may raise equity issues. 

\subsection{Sensitivity Analysis}
We conduct additional experiments to investigate the outcome of vaccination games under different disease-related, network-related, and penalty function related parameters. In all these tested cases, the equilibrium solution was found in no more than 15 iterations. For convenience, we call the vaccination game in Sec. \ref{sec: vac game res} as Scenario 0. 

We first change the degree distribution of the population, by considering the following cases: Scenarios (i) and (ii) change the average degree and the degree cutoff of the Poisson distribution, to $g'(1) = 4$ and $K = 16$, and $g'(1) = 10$ and $K = 27$, respectively; Scenario (iii) changes the degree distribution to a uniform distribution with the same average degree $g'(1) = 7$; then Scenarios (iv) and (v) consider uniform distributions with the same average degrees as Scenarios (i) and (ii), respectively. The vaccination rate decisions of all these scenarios are presented in Figure \ref{fig: compare dd}.
\begin{figure}[h!]
	\centering
	\includegraphics[width=0.8\textwidth]{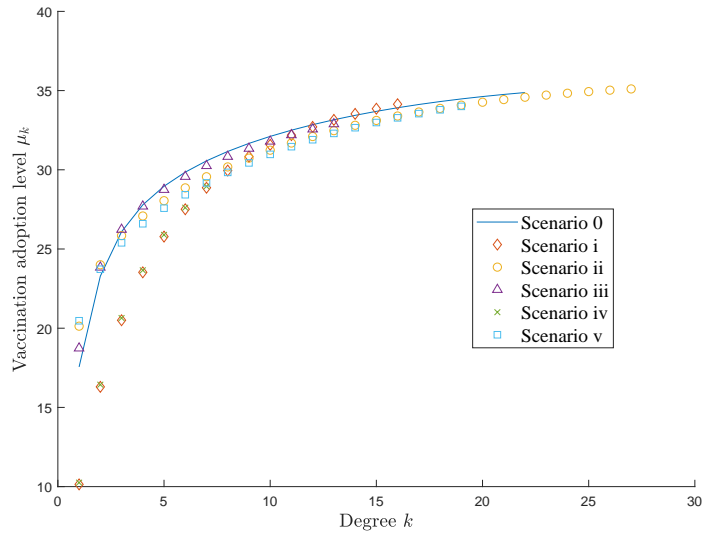}
	\caption{Compare vaccination decisions under different degree distributions.}
	\label{fig: compare dd}
\end{figure}

Interestingly, if the degree distributions, no matter Poisson or uniform, share the same average degree, they would lead to similar vaccination decisions for individuals of the same degrees. As the figure shows, Scenarios in the followings pairs, 0 and (iii), (i) and (iv), and (ii) and (v), all show exceptional agreement with each other. This suggests that the vaccine adoption level may not be affected by the shape of the degree distribution as long as they have the same average degree. However, it is worth mentioning that same vaccination decisions do not necessarily lead to the same epidemic outbreak outcome, including people's disutilities and final epidemic sizes, as the degree distributions still affects the disease propagation process. For instance, the final epidemic sizes in these scenarios are 0.1086, 0.2368, 0.1755, 0.1079, and 0.2287, respectively.

Moreover, the relationship between the average degree in the population and the equilibrium vaccination decisions is somewhat more complex. Figure \ref{fig: compare dd} shows that for low degree nodes (e.g., $k<5$), a higher average degree leads to a higher vaccination rate; however, the relationship is exactly the opposite for the high degree nodes (e.g, $k\geq15$). This phenomenon is probably due to two contradicting mechanisms. First, it is straightforward that high network connectivity increases the infection risk and thus encourages a high vaccination rate. However, with higher network connectivity, high-degree nodes will be infected so much faster, that in order to even slightly reduce their infection risk, a remarkably high vaccination adoption level is required. This induces high psychological suffering to a level that they would simply give up vaccination. A similar phenomenon can be observed in a later example.

Next, we examine the impact of the disease-related parameters $r$ and $u$. Note that in all the equations used to solve for the equilibrium, these two parameters appear together in the form of the disease transmissibility $T$. Therefore, we only need to tune the value of $T$. Since in Scenario (0) the transmissibility $T$ takes a moderate value of $0.5$, we consider three additional cases: Scenario (vi) and (vii) where $T = 0.3$ and $0.7$, respectively; and an extreme case Scenario (viii) where $T = 1$ (i.e., the removal rate $u = 0$). 

The results are presented in Figure \ref{fig: compare trans}. Again, the influence of disease transmissibility on people's vaccination decisions appears to be highly complex. The impact seems greater for low-degree nodes but becomes negligible for high-degree nodes. Furthermore, the final epidemic sizes of these three cases are 0.1611, 0.1873, and 0.1931, respectively, suggesting that disease transmissibility does not have a significant impact on the outcome of the epidemic in a vaccination game. This is probably because the overall vaccination coverage is already sufficiently high.

\begin{figure}[h]
	\centering
	\includegraphics[width=0.8\textwidth]{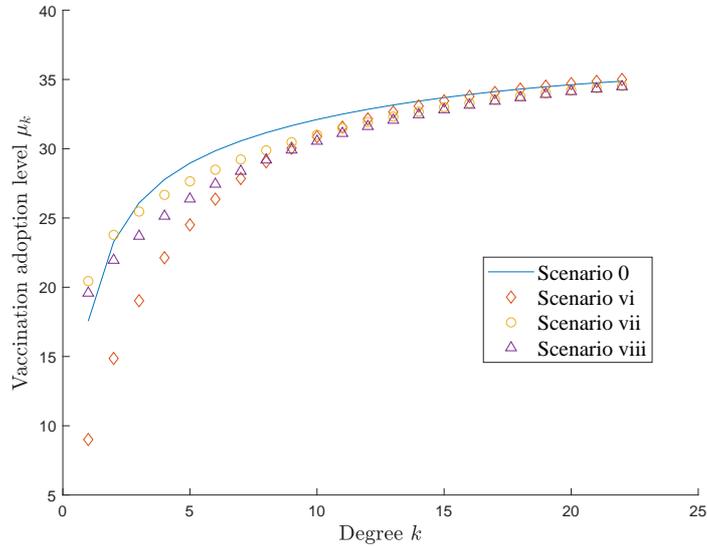}
	\caption{Compare vaccination decisions with different disease transmissibilities.}
	\label{fig: compare trans}
\end{figure}

Finally, we investigate the impact of the weight factors and the penalty function parameter. This is to examine the influence of vaccine-phobia as compared to infection penalties. Different combinations of the vaccination penalty parameters $\alpha_1$ and $b$ are tested, with $\alpha_2$ fixed to be 1, and the consequential infection and vaccination fraction, i.e. $R \left( \infty \right)$ and $V \left(  \infty \right)  + A \left( \infty \right) $ are presented, respectively, in Table \ref{table: sens analysis}. The results show, as expected, that people's vaccination decisions and the final outcome of the epidemic are largely dependent on the trade-off between infection risk and the vaccination penalty function. It suggests that vaccine-phobia has a huge impact on population vaccination decisions, and therefore the outcome of the epidemic outbreak.
\begin{table}[!htb]
	\centering
	\resizebox{\columnwidth}{!}{%
		\begin{tabular}{l | ccc }
			\hline
			Cases                  & $\alpha_1 = 1 \times 10^{-5} , b=1.5$ & $ \alpha_1 = 1 \times 10^{-4}, b=1.5$  & $\alpha_1=1 \times 10^{-3}, b=1.5$\\ \cline{1-1}
			Infection fraction 	& 0.0263         & 0.0736         & 0.2009   \\ \cline{1-1}
			Vaccination fraction   & 0.9201   & 0.8805          & 0.7596        \\
			\hline
			Cases                  & $\alpha_1= 1 \times 10^{-5}, b=2.0$ & \cellcolor{gray} $\alpha_1 = 1 \times 10^{-4}, b=2.0$ &  $\alpha_1=1 \times 10^{-3}, b=2.0$ \\ \cline{1-1}
			Infection fraction & 0.0790          & \cellcolor{gray} 0.1767          &  0.3646         \\ \cline{1-1}
			Vaccination fraction    & 0.8792         & \cellcolor{gray} 0.7851         &  0.6024                 \\
			\hline
			Cases                  & $\alpha_1 = 1 \times 10^{-5}, b=2.5$ & $\alpha_1 = 1 \times 10^{-4}, b=2.5$ & $\alpha_1=1 \times 10^{-3}, b=2.5$  \\ \cline{1-1}
			Infection fraction & 0.1594         & 0.2950         & 0.4958             \\ \cline{1-1}
			Vaccination fraction  & 0.8033          & 0.6711           & 0.4742    \\
			\hline
		\end{tabular}%
	}
	\caption{Sensitivity analysis on the penalty function parameters.}\label{table: sens analysis}
\end{table}

Moreover, a fairly counter-intuitive phenomenon is observed when the weight of vaccination penalty is relatively high. Taking the case with $\alpha_1=1 \times 10^{-3}$ and $b=2.5$ for instance, the vaccine adoption level of the population is shown in Figure \ref{fig: uk high vac pen}. High-degree nodes (e.g., $k\geq 10$) tend to give up on vaccinating themselves, as their vaccine adoption level is even lower than those with fewer degrees. Figure \ref{fig: derivatives high vac pen} demonstrates that at the equilibrium, for the high degree nodes, the marginal benefit in reducing the infection risk intersects with the marginal vaccine-phobia penalty at a low vaccine adoption level. This suggests it is difficult for them to further reduce their infection risk without inducing a much higher vaccination penalty. This implies that when the vaccine-phobia outweighs the penalty of infection, those who can be easily infected would rather take the risk of infection. In this case, non-differential vaccination enforced by the public health agency may work better than the largely voluntary vaccination game.

\begin{figure}[h] 
	\centering
	\begin{subfigure}[b]{0.48\textwidth}
		\includegraphics[width=\textwidth]{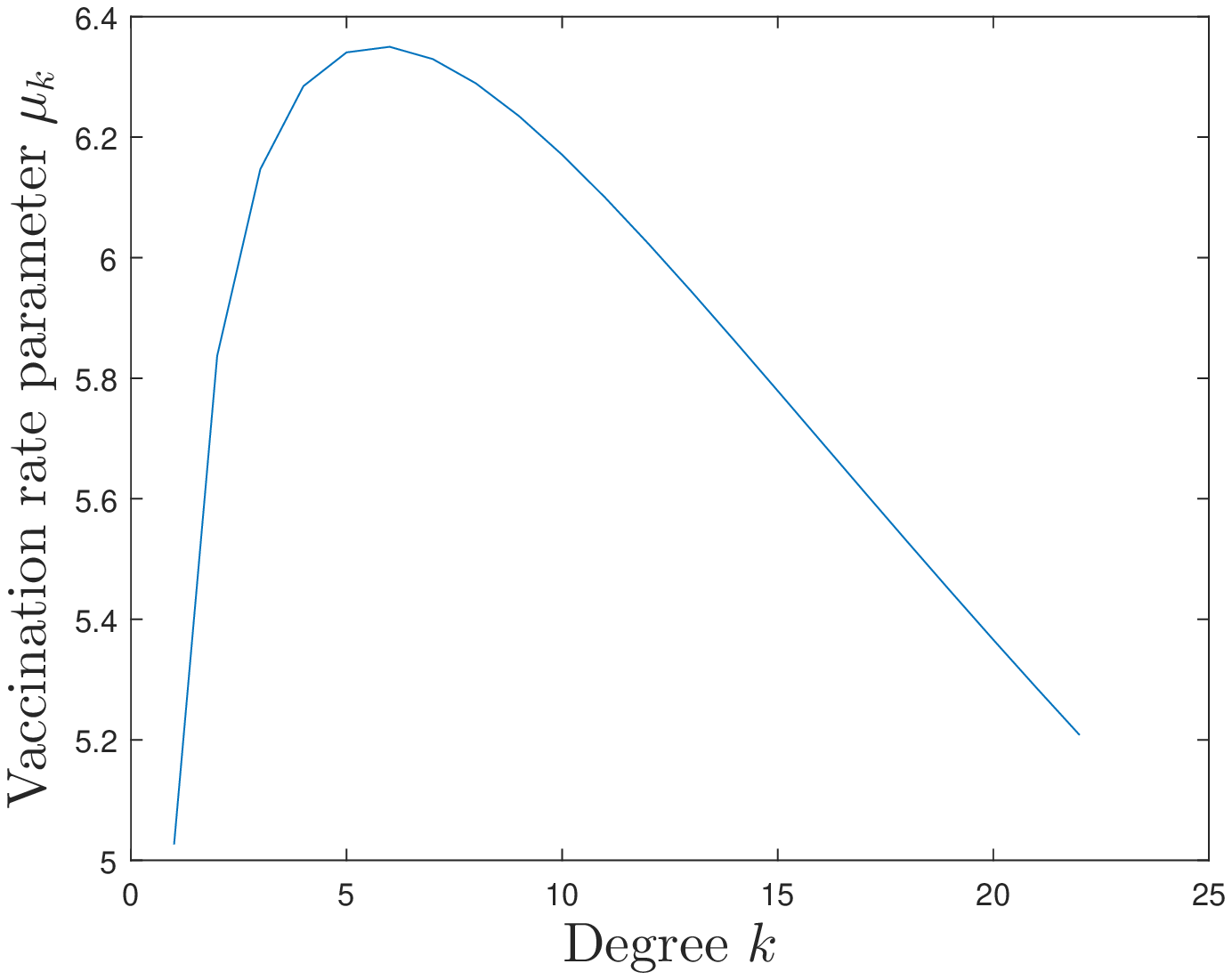}
		\caption{Vaccination rate decisions}
		\label{fig: uk high vac pen}
	\end{subfigure}
	~ 
	\begin{subfigure}[b]{0.48\textwidth}
		\includegraphics[width=\textwidth]{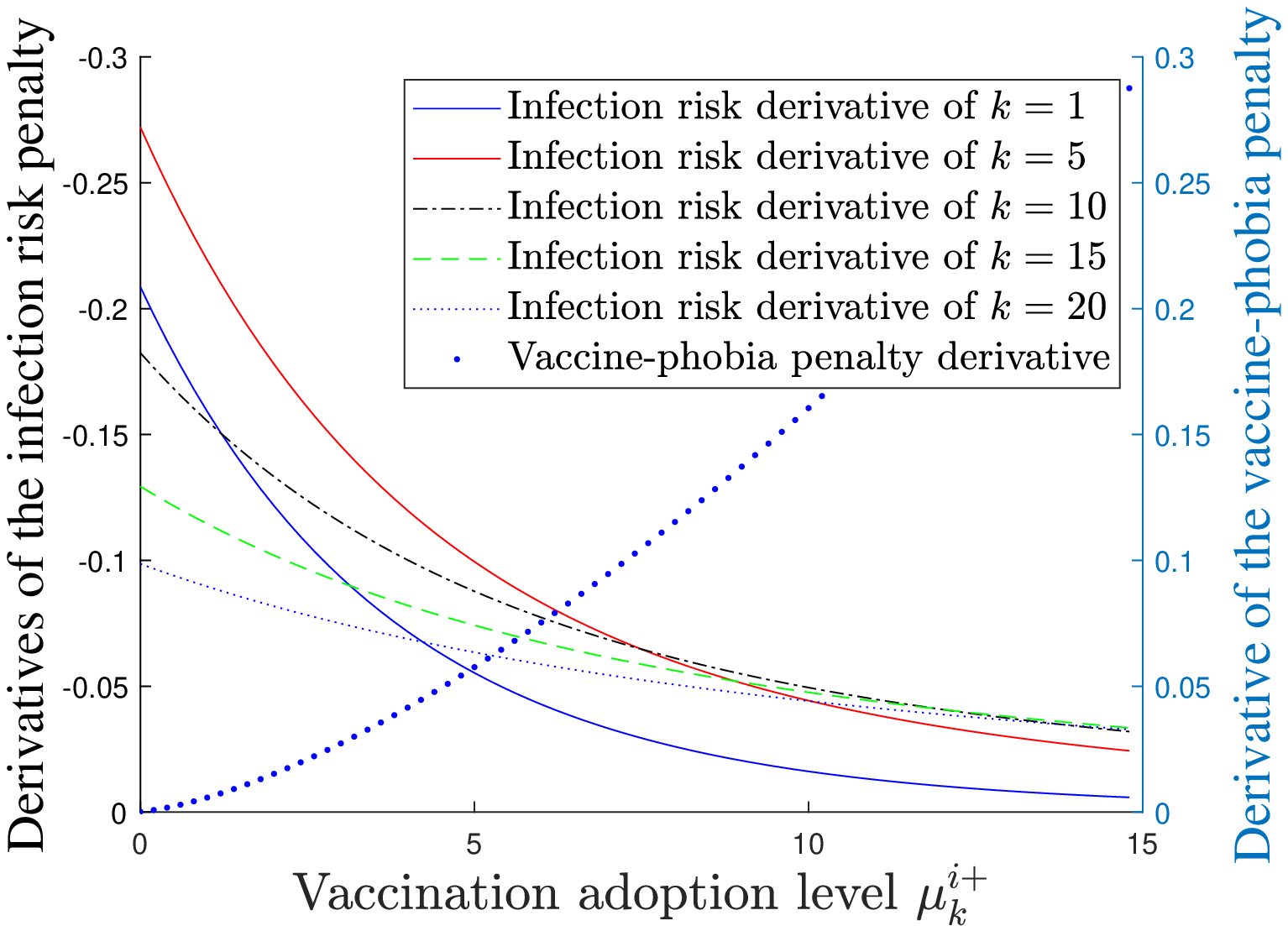}
		\caption{First order derivatives of disutilities w.r.t. $\mu_k^{i+}$}
		\label{fig: derivatives high vac pen}
	\end{subfigure}
	\caption{Vaccination decisions and equilibrium conditions of the high vaccination penalty scenario}\label{fig: high van pen}
\end{figure}

In another angle, we see that in a vaccination game, people's vaccination decisions are mostly affected by their relative perception of infection risk over vaccine-phobia, instead of the population connectivity or the disease transmissibility. Therefore, if a public health agency wishes to mitigate the propagation of an infectious disease by encouraging voluntary vaccine uptake among the population, it is important to alleviate people's fear towards vaccines and to properly inform them about the seriousness of the disease infection. Reducing population connectivity and disease transmissibility using other disease control approaches (e.g., quarantine) would not greatly affect people's vaccination behaviors, but they can also mitigate the propagation of the disease and should be considered along the course of disease spread as well.

\section{Conclusion}\label{sec:Conclusion}
This work investigates a dynamic vaccination game among a heterogeneous mixing population during an epidemic outbreak. Individual disutility includes the psychological penalty due to vaccine-phobia, as well as the expected penalty from potential infection risks. To this end, we first develop a generic SIRVA dynamic model to capture the epidemic dynamics incorporating the effect of vaccination. Then closed-form predictions of the final states of the dynamical system are derived, including the final epidemic size and the total vaccination coverage. The epidemic dynamics model and the derived final states are validated via agent-based stochastic simulations. Based on these results, a mathematical program that describes a Nash-type vaccination game is established. Several properties of Nash game are revealed through analytical derivations. A heuristic approach that solves for the equilibrium is then proposed, which is found to be quite efficient in our numerical experiments. Finally, a case study is performed to demonstrate the applicability of the vaccination game framework and the solution approach. Extensive numerical experiments are conducted to reveal the key factors that influence people's vaccination decisions and the outcome of the epidemic outbreak under equilibrium. Below are some interesting findings:
\begin{enumerate}[noitemsep,nolistsep]
	\item The final outcome of an epidemic outbreak is largely affected by the vaccination coverage and timing of high-degree individuals; therefore, delaying vaccination has a devastating impact on the final outcome of the epidemic; the performance of homogeneous vaccination (such as an enforced universal vaccination program) is more complex, depending on both the vaccination timing and people's vaccine-phobia level.
	\item The average degree among the population has a complex but quite limited impact on people's vaccination decisions, and the shape of the degree distribution across individuals does not affect their decisions; moreover, high disease transmissibility only encourages low-degree individuals to vaccinate sooner, but cannot make high-degree nodes to further increase their vaccination adoption level.
	\item The most critical factors that affect people's vaccination decisions is the relative weight of vaccination penalty as compared to the infection risk; if people have high vaccine-phobia or if the penalty of infection is low, high-degree nodes are observed to give up on vaccination; as a result, the final outcome of the epidemic outbreak is greatly affected by people's perception of risks.
\end{enumerate}

These findings suggest that it would always be more effective to increase vaccination coverage among high-degree individuals in a timely fashion to mitigate disease propagation. However, in reality targeted vaccination on high-degree nodes is not always practical. If vaccination can be done prior to the onset of a disease outbreak, an enforced vaccination program that non-discriminatively vaccinates the population could be very effective. If vaccination has to be delayed (e.g., because of limited availability of vaccines), promoting voluntary vaccine-uptake might perform better than homogeneous vaccination. In this case, it is imperative to reduce people's fear toward vaccine uptake, especially for high-degree individuals. Properly designed educational programs could be very useful for this purpose.

Based on the modeling framework and findings of this research, there are several possible future research directions regarding vaccination game. First, this work considers population heterogeneity regarding only their contact degrees. In reality, there is a much broader spectrum of heterogeneities that should be properly addressed, such as different social groups and population spatial distribution. Moreover, in this research we consider the disease spreads quickly among a static population. When the disease spreading speed is insignificant compared to the population birth/death rates, a different dynamics model is required. Furthermore, focusing on investigating population vaccination behavior under vaccine-phobia, this study assumes the vaccine resources are always abundant. It would also be very interesting to investigate a vaccination game where people need to compete for limited supply of vaccines.

\section*{References}
\bibliography{vaccination_game_ref}

\begin{thebibliography}{39}
\expandafter\ifx\csname natexlab\endcsname\relax\def\natexlab#1{#1}\fi
\providecommand{\url}[1]{\texttt{#1}}
\providecommand{\href}[2]{#2}
\providecommand{\path}[1]{#1}
\providecommand{\DOIprefix}{doi:}
\providecommand{\ArXivprefix}{arXiv:}
\providecommand{\URLprefix}{URL: }
\providecommand{\Pubmedprefix}{pmid:}
\providecommand{\doi}[1]{\href{http://dx.doi.org/#1}{\path{#1}}}
\providecommand{\Pubmed}[1]{\href{pmid:#1}{\path{#1}}}
\providecommand{\bibinfo}[2]{#2}
\ifx\xfnm\relax \def\xfnm[#1]{\unskip,\space#1}\fi
\bibitem[{Anderson et~al.(1992)Anderson, May and
  Anderson}]{anderson1992infectious}
\bibinfo{author}{Anderson, R.M.}, \bibinfo{author}{May, R.M.},
  \bibinfo{author}{Anderson, B.}, \bibinfo{year}{1992}.
\newblock \bibinfo{title}{Infectious diseases of humans: dynamics and control},
  volume~\bibinfo{volume}{28}.
\newblock \bibinfo{publisher}{Wiley Online Library}.
\bibitem[{Bansal et~al.(2006)Bansal, Pourbohloul and
  Meyers}]{bansal2006comparative}
\bibinfo{author}{Bansal, S.}, \bibinfo{author}{Pourbohloul, B.},
  \bibinfo{author}{Meyers, L.A.}, \bibinfo{year}{2006}.
\newblock \bibinfo{title}{A comparative analysis of influenza vaccination
  programs}.
\newblock \bibinfo{journal}{PLoS Medicine} \bibinfo{volume}{3},
  \bibinfo{pages}{e387}.
\bibitem[{Barth{\'e}lemy et~al.(2005)Barth{\'e}lemy, Barrat, Pastor-Satorras
  and Vespignani}]{barthelemy2005dynamical}
\bibinfo{author}{Barth{\'e}lemy, M.}, \bibinfo{author}{Barrat, A.},
  \bibinfo{author}{Pastor-Satorras, R.}, \bibinfo{author}{Vespignani, A.},
  \bibinfo{year}{2005}.
\newblock \bibinfo{title}{Dynamical patterns of epidemic outbreaks in complex
  heterogeneous networks}.
\newblock \bibinfo{journal}{Journal of Theoretical Biology}
  \bibinfo{volume}{235}, \bibinfo{pages}{275--288}.
\bibitem[{Bauch(2005)}]{bauch2005imitation}
\bibinfo{author}{Bauch, C.T.}, \bibinfo{year}{2005}.
\newblock \bibinfo{title}{Imitation dynamics predict vaccinating behaviour}.
\newblock \bibinfo{journal}{Proceedings of the Royal Society of London B:
  Biological Sciences} \bibinfo{volume}{272}, \bibinfo{pages}{1669--1675}.
\bibitem[{Bauch and Earn(2004)}]{bauch2004vaccination}
\bibinfo{author}{Bauch, C.T.}, \bibinfo{author}{Earn, D.J.},
  \bibinfo{year}{2004}.
\newblock \bibinfo{title}{Vaccination and the theory of games}.
\newblock \bibinfo{journal}{Proceedings of the National Academy of Sciences}
  \bibinfo{volume}{101}, \bibinfo{pages}{13391--13394}.
\bibitem[{Bauch et~al.(2003)Bauch, Galvani and Earn}]{bauch2003group}
\bibinfo{author}{Bauch, C.T.}, \bibinfo{author}{Galvani, A.P.},
  \bibinfo{author}{Earn, D.J.}, \bibinfo{year}{2003}.
\newblock \bibinfo{title}{Group interest versus self-interest in smallpox
  vaccination policy}.
\newblock \bibinfo{journal}{Proceedings of the National Academy of Sciences}
  \bibinfo{volume}{100}, \bibinfo{pages}{10564--10567}.
\bibitem[{Buldyrev et~al.(2010)Buldyrev, Parshani, Paul, Stanley and
  Havlin}]{buldyrev2010catastrophic}
\bibinfo{author}{Buldyrev, S.V.}, \bibinfo{author}{Parshani, R.},
  \bibinfo{author}{Paul, G.}, \bibinfo{author}{Stanley, H.E.},
  \bibinfo{author}{Havlin, S.}, \bibinfo{year}{2010}.
\newblock \bibinfo{title}{Catastrophic cascade of failures in interdependent
  networks}.
\newblock \bibinfo{journal}{Nature} \bibinfo{volume}{464},
  \bibinfo{pages}{1025}.
\bibitem[{Calandrillo(2003)}]{calandrillo2003vanishing}
\bibinfo{author}{Calandrillo, S.P.}, \bibinfo{year}{2003}.
\newblock \bibinfo{title}{Vanishing vaccinations: why are so many americans
  opting out of vaccinating their children}.
\newblock \bibinfo{journal}{U. Mich. JL Reform} \bibinfo{volume}{37},
  \bibinfo{pages}{353}.
\bibitem[{Chor et~al.(2009)Chor, Ngai, Goggins, Wong, Wong, Lee, Leung, Rainer,
  Griffiths and Chan}]{chor2009willingness}
\bibinfo{author}{Chor, J.S.}, \bibinfo{author}{Ngai, K.L.},
  \bibinfo{author}{Goggins, W.B.}, \bibinfo{author}{Wong, M.C.},
  \bibinfo{author}{Wong, S.Y.}, \bibinfo{author}{Lee, N.},
  \bibinfo{author}{Leung, T.f.}, \bibinfo{author}{Rainer, T.H.},
  \bibinfo{author}{Griffiths, S.}, \bibinfo{author}{Chan, P.K.},
  \bibinfo{year}{2009}.
\newblock \bibinfo{title}{Willingness of hong kong healthcare workers to accept
  pre-pandemic influenza vaccination at different who alert levels: two
  questionnaire surveys}.
\newblock \bibinfo{journal}{Bmj} \bibinfo{volume}{339}, \bibinfo{pages}{b3391}.
\bibitem[{Cojocaru(2008)}]{cojocaru2008dynamic}
\bibinfo{author}{Cojocaru, M.G.}, \bibinfo{year}{2008}.
\newblock \bibinfo{title}{Dynamic equilibria of group vaccination strategies in
  a heterogeneous population}.
\newblock \bibinfo{journal}{Journal of Global Optimization}
  \bibinfo{volume}{40}, \bibinfo{pages}{51--63}.
\bibitem[{Di~Muro et~al.(2018)Di~Muro, Alvarez-Zuzek, Havlin and
  Braunstein}]{di2018multiple}
\bibinfo{author}{Di~Muro, M.}, \bibinfo{author}{Alvarez-Zuzek, L.},
  \bibinfo{author}{Havlin, S.}, \bibinfo{author}{Braunstein, L.},
  \bibinfo{year}{2018}.
\newblock \bibinfo{title}{Multiple outbreaks in epidemic spreading with local
  vaccination and limited vaccines}.
\newblock \bibinfo{journal}{arXiv preprint arXiv:1805.01564} .
\bibitem[{Funk et~al.(2010)Funk, Salath{\'e} and Jansen}]{funk2010modelling}
\bibinfo{author}{Funk, S.}, \bibinfo{author}{Salath{\'e}, M.},
  \bibinfo{author}{Jansen, V.A.}, \bibinfo{year}{2010}.
\newblock \bibinfo{title}{Modelling the influence of human behaviour on the
  spread of infectious diseases: a review}.
\newblock \bibinfo{journal}{Journal of the Royal Society Interface} ,
  \bibinfo{pages}{rsif20100142}.
\bibitem[{Han and Sun(2016)}]{han2016evolutionary}
\bibinfo{author}{Han, D.}, \bibinfo{author}{Sun, M.}, \bibinfo{year}{2016}.
\newblock \bibinfo{title}{An evolutionary vaccination game in the modified
  activity driven network by considering the closeness}.
\newblock \bibinfo{journal}{Physica A: Statistical Mechanics and its
  Applications} \bibinfo{volume}{443}, \bibinfo{pages}{49--57}.
\bibitem[{Huang et~al.(2011)Huang, Gao, Buldyrev, Havlin and
  Stanley}]{huang2011robustness}
\bibinfo{author}{Huang, X.}, \bibinfo{author}{Gao, J.},
  \bibinfo{author}{Buldyrev, S.V.}, \bibinfo{author}{Havlin, S.},
  \bibinfo{author}{Stanley, H.E.}, \bibinfo{year}{2011}.
\newblock \bibinfo{title}{Robustness of interdependent networks under targeted
  attack}.
\newblock \bibinfo{journal}{Physical Review E} \bibinfo{volume}{83},
  \bibinfo{pages}{065101}.
\bibitem[{Kermack and McKendrick(1927)}]{kermack1927contribution}
\bibinfo{author}{Kermack, W.O.}, \bibinfo{author}{McKendrick, A.G.},
  \bibinfo{year}{1927}.
\newblock \bibinfo{title}{A contribution to the mathematical theory of
  epidemics}.
\newblock \bibinfo{journal}{Proceedings of the Royal Society of London. Series
  A, Containing Papers of a Mathematical and Physical Character}
  \bibinfo{volume}{115}, \bibinfo{pages}{700--721}.
\bibitem[{Kossinets and Watts(2006)}]{kossinets2006empirical}
\bibinfo{author}{Kossinets, G.}, \bibinfo{author}{Watts, D.J.},
  \bibinfo{year}{2006}.
\newblock \bibinfo{title}{Empirical analysis of an evolving social network}.
\newblock \bibinfo{journal}{Science} \bibinfo{volume}{311},
  \bibinfo{pages}{88--90}.
\bibitem[{Lu et~al.(2002)Lu, Chi and Chen}]{lu2002effect}
\bibinfo{author}{Lu, Z.}, \bibinfo{author}{Chi, X.}, \bibinfo{author}{Chen,
  L.}, \bibinfo{year}{2002}.
\newblock \bibinfo{title}{The effect of constant and pulse vaccination on sir
  epidemic model with horizontal and vertical transmission}.
\newblock \bibinfo{journal}{Mathematical and Computer Modelling}
  \bibinfo{volume}{36}, \bibinfo{pages}{1039--1057}.
\bibitem[{Ma et~al.(2013)Ma, van~den Driessche and
  Willeboordse}]{ma2013importance}
\bibinfo{author}{Ma, J.}, \bibinfo{author}{van~den Driessche, P.},
  \bibinfo{author}{Willeboordse, F.H.}, \bibinfo{year}{2013}.
\newblock \bibinfo{title}{The importance of contact network topology for the
  success of vaccination strategies}.
\newblock \bibinfo{journal}{Journal of theoretical biology}
  \bibinfo{volume}{325}, \bibinfo{pages}{12--21}.
\bibitem[{Madar et~al.(2004)Madar, Kalisky, Cohen, Ben-avraham and
  Havlin}]{madar2004immunization}
\bibinfo{author}{Madar, N.}, \bibinfo{author}{Kalisky, T.},
  \bibinfo{author}{Cohen, R.}, \bibinfo{author}{Ben-avraham, D.},
  \bibinfo{author}{Havlin, S.}, \bibinfo{year}{2004}.
\newblock \bibinfo{title}{Immunization and epidemic dynamics in complex
  networks}.
\newblock \bibinfo{journal}{The European Physical Journal B}
  \bibinfo{volume}{38}, \bibinfo{pages}{269--276}.
\bibitem[{McKee and Bohannon(2016)}]{mckee2016exploring}
\bibinfo{author}{McKee, C.}, \bibinfo{author}{Bohannon, K.},
  \bibinfo{year}{2016}.
\newblock \bibinfo{title}{Exploring the reasons behind parental refusal of
  vaccines}.
\newblock \bibinfo{journal}{The Journal of Pediatric Pharmacology and
  Therapeutics} \bibinfo{volume}{21}, \bibinfo{pages}{104--109}.
\bibitem[{Miller(2011)}]{miller2011note}
\bibinfo{author}{Miller, J.C.}, \bibinfo{year}{2011}.
\newblock \bibinfo{title}{A note on a paper by erik volz: Sir dynamics in
  random networks}.
\newblock \bibinfo{journal}{Journal of mathematical biology}
  \bibinfo{volume}{62}, \bibinfo{pages}{349--358}.
\bibitem[{Molloy and Reed(1995)}]{molloy1995critical}
\bibinfo{author}{Molloy, M.}, \bibinfo{author}{Reed, B.}, \bibinfo{year}{1995}.
\newblock \bibinfo{title}{A critical point for random graphs with a given
  degree sequence}.
\newblock \bibinfo{journal}{Random structures \& algorithms}
  \bibinfo{volume}{6}, \bibinfo{pages}{161--180}.
\bibitem[{Myers and Goodwin(2011)}]{myers2011determinants}
\bibinfo{author}{Myers, L.B.}, \bibinfo{author}{Goodwin, R.},
  \bibinfo{year}{2011}.
\newblock \bibinfo{title}{Determinants of adults' intention to vaccinate
  against pandemic swine flu}.
\newblock \bibinfo{journal}{BMC Public Health} \bibinfo{volume}{11},
  \bibinfo{pages}{15}.
\bibitem[{Mylius et~al.(2008)Mylius, Hagenaars, Lugn{\'e}r and
  Wallinga}]{mylius2008optimal}
\bibinfo{author}{Mylius, S.D.}, \bibinfo{author}{Hagenaars, T.J.},
  \bibinfo{author}{Lugn{\'e}r, A.K.}, \bibinfo{author}{Wallinga, J.},
  \bibinfo{year}{2008}.
\newblock \bibinfo{title}{Optimal allocation of pandemic influenza vaccine
  depends on age, risk and timing}.
\newblock \bibinfo{journal}{Vaccine} \bibinfo{volume}{26},
  \bibinfo{pages}{3742--3749}.
\bibitem[{Newman(2002)}]{newman2002spread}
\bibinfo{author}{Newman, M.E.}, \bibinfo{year}{2002}.
\newblock \bibinfo{title}{Spread of epidemic disease on networks}.
\newblock \bibinfo{journal}{Physical review E} \bibinfo{volume}{66},
  \bibinfo{pages}{016128}.
\bibitem[{Newman et~al.(2001)Newman, Strogatz and Watts}]{newman2001random}
\bibinfo{author}{Newman, M.E.}, \bibinfo{author}{Strogatz, S.H.},
  \bibinfo{author}{Watts, D.J.}, \bibinfo{year}{2001}.
\newblock \bibinfo{title}{Random graphs with arbitrary degree distributions and
  their applications}.
\newblock \bibinfo{journal}{Physical review E} \bibinfo{volume}{64},
  \bibinfo{pages}{026118}.
\bibitem[{Pang and Chen(2007)}]{pang2007delayed}
\bibinfo{author}{Pang, G.}, \bibinfo{author}{Chen, L.}, \bibinfo{year}{2007}.
\newblock \bibinfo{title}{A delayed sirs epidemic model with pulse
  vaccination}.
\newblock \bibinfo{journal}{Chaos, Solitons \& Fractals} \bibinfo{volume}{34},
  \bibinfo{pages}{1629--1635}.
\bibitem[{Pastor-Satorras et~al.(2015)Pastor-Satorras, Castellano, Van~Mieghem
  and Vespignani}]{pastor2015epidemic}
\bibinfo{author}{Pastor-Satorras, R.}, \bibinfo{author}{Castellano, C.},
  \bibinfo{author}{Van~Mieghem, P.}, \bibinfo{author}{Vespignani, A.},
  \bibinfo{year}{2015}.
\newblock \bibinfo{title}{Epidemic processes in complex networks}.
\newblock \bibinfo{journal}{Reviews of modern physics} \bibinfo{volume}{87},
  \bibinfo{pages}{925}.
\bibitem[{Reluga(2009)}]{reluga2009sis}
\bibinfo{author}{Reluga, T.C.}, \bibinfo{year}{2009}.
\newblock \bibinfo{title}{An sis epidemiology game with two subpopulations}.
\newblock \bibinfo{journal}{Journal of Biological Dynamics}
  \bibinfo{volume}{3}, \bibinfo{pages}{515--531}.
\bibitem[{Reluga et~al.(2006)Reluga, Bauch and Galvani}]{reluga2006evolving}
\bibinfo{author}{Reluga, T.C.}, \bibinfo{author}{Bauch, C.T.},
  \bibinfo{author}{Galvani, A.P.}, \bibinfo{year}{2006}.
\newblock \bibinfo{title}{Evolving public perceptions and stability in vaccine
  uptake}.
\newblock \bibinfo{journal}{Mathematical biosciences} \bibinfo{volume}{204},
  \bibinfo{pages}{185--198}.
\bibitem[{Reluga and Galvani(2011)}]{reluga2011general}
\bibinfo{author}{Reluga, T.C.}, \bibinfo{author}{Galvani, A.P.},
  \bibinfo{year}{2011}.
\newblock \bibinfo{title}{A general approach for population games with
  application to vaccination}.
\newblock \bibinfo{journal}{Mathematical biosciences} \bibinfo{volume}{230},
  \bibinfo{pages}{67--78}.
\bibitem[{Smith et~al.(2011)Smith, Humiston, Marcuse, Zhao, Dorell, Howes and
  Hibbs}]{smith2011parental}
\bibinfo{author}{Smith, P.J.}, \bibinfo{author}{Humiston, S.G.},
  \bibinfo{author}{Marcuse, E.K.}, \bibinfo{author}{Zhao, Z.},
  \bibinfo{author}{Dorell, C.G.}, \bibinfo{author}{Howes, C.},
  \bibinfo{author}{Hibbs, B.}, \bibinfo{year}{2011}.
\newblock \bibinfo{title}{Parental delay or refusal of vaccine doses, childhood
  vaccination coverage at 24 months of age, and the health belief model}.
\newblock \bibinfo{journal}{Public health reports} \bibinfo{volume}{126},
  \bibinfo{pages}{135--146}.
\bibitem[{Srinivasan et~al.(2006)Srinivasan, Motani and
  Ooi}]{srinivasan2006analysis}
\bibinfo{author}{Srinivasan, V.}, \bibinfo{author}{Motani, M.},
  \bibinfo{author}{Ooi, W.T.}, \bibinfo{year}{2006}.
\newblock \bibinfo{title}{Analysis and implications of student contact patterns
  derived from campus schedules}, in: \bibinfo{booktitle}{Proceedings of the
  12th annual international conference on Mobile computing and networking},
  \bibinfo{organization}{ACM}, pp. \bibinfo{pages}{86--97}.
\bibitem[{Sun(2018)}]{sun_2018}
\bibinfo{author}{Sun, L.H.}, \bibinfo{year}{2018}.
\newblock \bibinfo{title}{Kids in these u.s. hot spots at higher risk because
  parents opt out of vaccinations}.
\newblock \bibinfo{journal}{The Washington Post} \URLprefix
  \url{https://www.washingtonpost.com/news/to-your-health/wp/2018/06/12/kids-in-these-u-s-hotspots-at-higher-risk-because-parents-opt-out-of-vaccinations/?noredirect=on&utm_term=.9dfd0c2ae20b}.
\bibitem[{{The College of Physicians of Philadelphia}(2018)}]{history}
\bibinfo{author}{{The College of Physicians of Philadelphia}},
  \bibinfo{year}{2018}.
\newblock \bibinfo{title}{History of anti-vaccination movements}.
\newblock
  \bibinfo{howpublished}{\url{https://www.historyofvaccines.org/content/articles/history-anti-vaccination-movements}}.
\newblock \bibinfo{note}{Accessed: 2018-08-04}.
\bibitem[{Thomas(2018)}]{flushots}
\bibinfo{author}{Thomas, N.}, \bibinfo{year}{2018}.
\newblock \bibinfo{title}{One-third of us parents plan to skip flu shots for
  their kids this season}.
\newblock
  \bibinfo{howpublished}{\url{https://www.cnn.com/2018/11/19/health/flu-vaccine-children-study/index.html?no-st=1543269188}}.
\newblock \bibinfo{note}{Accessed: 2018-11-26}.
\bibitem[{Tozzi et~al.(2009)Tozzi, Gesualdo, Romano, Caione
  et~al.}]{tozzi2009parental}
\bibinfo{author}{Tozzi, A.E.}, \bibinfo{author}{Gesualdo, F.},
  \bibinfo{author}{Romano, M.}, \bibinfo{author}{Caione, D.}, et~al.,
  \bibinfo{year}{2009}.
\newblock \bibinfo{title}{Parental attitude toward influenza a (h1n1) v
  vaccination in italy.}
\newblock \bibinfo{journal}{Vaccine} \bibinfo{volume}{27}.
\bibitem[{Volz(2008)}]{volz2008sir}
\bibinfo{author}{Volz, E.}, \bibinfo{year}{2008}.
\newblock \bibinfo{title}{Sir dynamics in random networks with heterogeneous
  connectivity}.
\newblock \bibinfo{journal}{Journal of mathematical biology}
  \bibinfo{volume}{56}, \bibinfo{pages}{293--310}.
\bibitem[{Zhang(2013)}]{zhang2013impact}
\bibinfo{author}{Zhang, Y.}, \bibinfo{year}{2013}.
\newblock \bibinfo{title}{The impact of other-regarding tendencies on the
  spatial vaccination game}.
\newblock \bibinfo{journal}{Chaos, Solitons \& Fractals} \bibinfo{volume}{56},
  \bibinfo{pages}{209--215}.

\end{thebibliography}

\end{document}